\newlength{\continueindent}
\renewenvironment{algorithmic}[1][0]%
   {%
   \edef\ALG@numberfreq{#1}%
   \def\@currentlabel{\theALG@line}%
   \setcounter{ALG@line}{0}%
   \setcounter{ALG@rem}{0}%
   \let\\\algbreak%
   \expandafter\edef\csname ALG@currentblock@\theALG@nested\endcsname{0}%
   \expandafter\let\csname ALG@currentlifetime@\theALG@nested\endcsname\relax%
   \begin{list}%
      {\ALG@step}%
      {%
      \rightmargin\z@%
      \itemsep\z@ \itemindent\z@ \listparindent2em%
      \partopsep\z@ \parskip\z@ \parsep\z@%
      \labelsep 0.5em \topsep 0.2em%\skip 1.2em 
      \ifthenelse{\equal{#1}{0}}%
         {\labelwidth 0.5em}%
         {\labelwidth 1.2em}%
       \leftmargin\labelwidth \addtolength{\leftmargin}{\labelsep}
      \ALG@tlm\z@%
      }%
      \parshape 2 \leftmargin \linewidth \continueindent \dimexpr\linewidth-\continueindent\relax
   \setcounter{ALG@nested}{0}%
   \ALG@beginalgorithmic%
   }%
   {% end{algorithmic}
   % check if all blocks are closed
   \ALG@closeloops%
   \expandafter\ifnum\csname ALG@currentblock@\theALG@nested\endcsname=0\relax%
   \else%
      \PackageError{algorithmicx}{Some blocks are not closed!!!}{}%
   \fi%
   \ALG@endalgorithmic%
   \end{list}%
   }%
\def\BState{\State\hskip-\ALG@thistlm}
\newtheorem{theorem}{Theorem}
\newtheorem{proposition}[theorem]{Proposition}
\newtheorem{remark}[theorem]{Remark}
\newcommand{\set}[1]{{\{ #1 \}}}
\def\r{\rho}
\def\f{\varphi}
\DeclareMathOperator{\G}{\Box}
\DeclareMathOperator{\F}{\rotatebox[origin=c]{45}{$\Box$}}
\title{\LARGE \bf
Shrinking Horizon Model Predictive Control with Signal Temporal Logic
Constraints under Stochastic Disturbances
}
\author{Samira S.~Farahani$^{*}$, Rupak Majumdar$^{*}$, Vinayak Prabhu$^{*}$, Sadegh Esmaeil Zadeh Soudjani$^{*}$% <-this % stops a space
\thanks{$^{*}$The authors are with the Max Planck Institute for Software Systems, Germany. 
A limited subset of the results of this paper is accepted for presentation at American Control Conference 2017 \cite{FarMaj:16}. 
        {\tt\small farahani,vinayak,rupak,sadegh@mpi-sws.org}}%
        }
\begin{document}

\maketitle
\thispagestyle{empty}
\pagestyle{empty}

%%%%%%%%%%%%%%%%%%%%%%%%%%%%%%%%%%%%%%%%%%%%%%%%%%%%%%%%%%%%%%%%
\begin{abstract}
We present Shrinking Horizon Model Predictive Control (SHMPC)
for discrete-time linear systems with Signal Temporal Logic
(STL) specification constraints under stochastic disturbances.
The control objective is to maximize an optimization function
under the restriction that a given STL specification is satisfied with
high probability against  stochastic uncertainties.
We formulate a general solution, which does not require precise knowledge
of the probability distributions of the (possibly dependent) stochastic
disturbances; only the bounded support intervals of the density functions and 
moment intervals are used.
For the specific case of disturbances that are independent and normally
distributed, we optimize the controllers further by utilizing  knowledge
of the disturbance probability distributions. We show that in both cases,
the control law can be obtained by solving optimization problems
with linear constraints at each step.
We experimentally demonstrate effectiveness of this approach by
synthesizing a controller for an HVAC system.
\end{abstract}

%
%\keywords{Signal temporal logic; model predictive control; stochastic disturbance}
%

%%%%%%%%%%%%%%%%%%%%%%%%%%%%%%%%%%%%%%%%%%%%%%%%%%%%%%%%%%%%%%%%%%%%%%%%%%%%%%%%
\section{Introduction}
\label{sec_intro}

We consider the control synthesis problem
%\VP{bounded-time control}
for stochastic discrete-time linear 
systems under path constraints that are expressed as temporal logic specifications and are written
in signal temporal logic (STL) \cite{MalNic:04}. 
Our aim is to obtain a controller that robustly satisfies desired temporal properties with high probability 
despite stochastic disturbances, while optimizing additional control objectives.
With focus on temporal properties defined on a finite 
path segment, we use the model predictive control (MPC) scheme 
\cite{BemHee:02-003,DeSvan:99-10,LazHee:06,Mac:02}
with a \emph{shrinking horizon}:
the horizon window is fixed and not shifted at each time step of the controller synthesis 
problem. We start with an initial prediction horizon dependent on the temporal logic
constraints, compute the optimal control sequence for the horizon, apply the
first step, observe the system evolution under the stochastic disturbance, 
and repeat the process (decreasing the prediction horizon by 1)
until the end of the original time horizon.

Our proposed setting requires solving three technical challenges in the MPC framework.

First, in addition to optimizing control and state cost, the derived controller must ensure that 
the system evolution satisfies 
chance constraints arising from the STL specifications. 
Previous choices of control actions can impose temporal constraints on the rest of the path.
We describe an algorithm that updates the temporal constraints based on previous actions.

Second, for some temporal constraints, we may require that the system satisfies the constraints
\emph{robustly}: small changes to the inputs should not invalidate the temporal constraint.
To ensure robust satisfaction, we use a quantitative notion of robustness for STL~\cite{fainekos2009robustness}.
We augment the control objective to
maximize the expected robustness of an STL specification,
in addition to minimizing control and state costs, under chance
constraints.
Unfortunately, the resulting optimization problem is not convex.

As a third contribution, we propose a tractable approximation method for the solution of the 
optimization problem. We conservatively approximate the chance
constraints by linear inequalities.
Second, we provide a tractable procedure to compute
an upper bound for the expected value of the robustness function
under these linear constraints.
%\VP{Is this upper bound a convex function of the control inputs?} 
%which is then optimized under linear constraints. 
%The upper bound is computable as a function of the moments of the disturbance.

Recently receding horizon control with STL constraints has been studied
for a variety of domains \cite{FarRam:15,Raman15}.
In these works, the disturbance is assumed 
to be deterministic but from a bounded polytope, 
and the worst-case MPC optimization problem is solved. 
The control synthesis for deterministic systems with 
probabilistic STL specifications is studied in \cite{SadKap:16} but only a fragment of 
STL is considered in order to obtain a convex optimization problem. 
Also, the receding horizon control has been applied to deterministic 
systems in the presence of perception uncertainty \cite{JhaRam:16}.
Additionally, chance-constrained MPC has been addressed in \cite{SchNik:99} for 
deterministic systems, in which the underlying probability space comes only from the measurement noise.
Application of chance-constrained MPC to optimal control of drinking water networks is studied in \cite{GroOca:14}.
%MPC with STL constraint has been discussed in \cite{raman-cdc14,FarRam:15}.
%The constraints of these works are 
%has been applied to 
%quantile-based  without any temporal logic specifications has 
%been addressed ;  also a quantile-based 
% any temporal logic specifications has been studied for 
%the case that there is only measurement noise and the uncertainty only appears in the constraints.}

In this paper, we assume that the the disturbance signal has an arbitrary probability 
distribution with bounded domain and that we only know the support and the first 
moment interval for each component of the disturbance signal. 
In order to solve the optimization problem more efficiently, 
we transform chance constraints into their equivalent (or approximate) 
linear constraints. To this end, we apply the technique presented by \cite{BouGou:16}, 
to approximate the chance constraints with an upper bound. 
Also, the expected value of the robustness function
can be approximated by the moment intervals of 
the disturbance signal, and can be computed without using numerical integration. 

Furthermore, as an additional case in this study, we show that if the disturbance 
signal is normally distributed and hence, has no bounded support, instead of truncating 
the distribution to obtain a finite interval of support for random variables, we can 
use a different approach, which is based on the quantiles of the normally distributed random variables to 
replace the chance constraints by linear constraints. In this case, we also show that 
the expected value of the robustness function can be replaced by an 
upper bound using the methods presented in \cite{Farvan:16}. 

We empirically demonstrate the effectiveness of our approach by 
synthesizing a controller for a Heating, Ventilation and Air Conditioning (HVAC) 
system.
We compare our approach with open-loop optimal controller synthesis and with robust MPC \cite{Raman15},
and show that our approach can lead to significant
energy savings.

\subsection{Notations}
\label{notation}
%\Sadegh{
We use $\mathbb{R}$ for the set of reals and $\mathbb N := \{0,1,2,\ldots\}$ for the set of non-negative integers.
%and  $\mathbb N_n := \{1,2,\ldots,n\}$ for the finite set of positive integers.
The set $\mathbb B:=\{\top,\bot\}$ indicates logical true and false.
For a vector $v\in\mathbb R^s$, its components are denoted by $v_k$, $k\in\{1,\dots,s\}$.
For a sequence $\{v(t)\in\mathbb R^s,\,\,t\in\mathbb N\}$ and $t_1<t_2$, we define
$\tilde v(t_1:t_2) := [v(t_1),v(t_1+1),\ldots,v(t_2-1)]$, 
%}
In this paper, all random variables are denoted by capital letters and 
the deterministic variables are denoted by small letters.
%\Sadegh{
We also use small letter $y$ to indicate observations of a random vector $Y$.
For a sequence of random vectors $\{Y(t)\in\mathbb R^s,\,\,t\in\mathbb N\}$ and $t_1\le t< t_2$, we define
$\bar Y(t_1:t:t_2) := [y(t_1),y(t_1+1),\ldots,y(t),Y(t+1),\ldots,Y(t_2-1)]$,
which is a matrix containing observations of the random variable up to time $t$ augmented 
with its unobserved values after $t$. 
%}
For a random variable $Y(t)$ denote the support interval by $I_{Y(t)}$ 
and the first moment\footnote{The 
expected value of a random variable $X$ with support $\mathcal{D}$ and 
the cumulative distribution function $F$ is defined as $\mathbb{E}[X] = \int_{\mathcal{D}}xdF(x)$. 
The expectation exists if the integral is well-defined and yields a finite value.}
by $\mathbb{E}[Y(t)]$. 

We consider operations on intervals according to interval arithmetic: for two arbitrary 
intervals $[a,b]$ and $[c,d]$, and constants $\lambda,\gamma\in\mathbb{R}$, 
we have $[a,b]+[c,d] = [a+c,b+d]$ and
$\lambda \cdot [a,b] + \gamma= [\min(\lambda a, \lambda b) + \gamma,\max(\lambda a, \lambda b)+\gamma]$.

\section{Discrete-Time Stochastic Linear Systems}
\label{sec:linsys}
In this paper, we consider systems in discrete-time that can be modeled by linear 
difference equations perturbed by stochastic disturbances. Depending on 
the probability distribution of the disturbance signal, we conduct our study for two cases: 
a) the disturbance signal has an arbitrary probability distribution with a bounded domain 
for which we only know the support and their first moment intervals; and
b) the disturbance signal has a normal distribution.
The first case can be extended to random variables with an 
unbounded support, such as normal or exponential random variables, by truncating 
their distributions.
The specific form of the distribution in the second case enables us to perform a more 
precise analysis using properties of the normal distribution. Note that 
the support of a random variable $X$ with values in $\mathbb{R}^n$ is defined as the set 
$\{x\in\mathbb{R}^N\,|\,\text{Pr}_X[\mathcal{B}(x,r)]>0,\, \forall r>0\}$, where 
$\mathcal{B}(x,r)$ denotes the ball with center at $x$ and radius $r$; alternatively, 
the support can be defined as the smallest closed set $C$ such that $\text{Pr}_X[C]=1$.

%\subsection{Arbitrary probability distributions} %{Random variables with arbitrary probability distribution}
%\label{sec:sys}
Consider a (time-variant) discrete-time stochastic system modeled by the difference equation 
\begin{equation}
\label{eq:discsys}
X(t+1) = A(t) X(t) + B(t) u(t) + W(t),\;\;X(0) = x_0,
\end{equation}
where $X(t)\in\mathbb{R}^{n}$ denotes the state of the system at time instant $t$, 
$u(t)\in\mathbb{R}^{m}$ denotes the control input at time instant $t$, and $W(t)\in\mathbb{R}^s$ 
is a vector of %independent 
random variables, the components of which have %a certain probability distribution; 
either of the above mentioned probability distributions.
The random vector $W(t)$ can be interpreted as the process noise or 
an adversarial disturbance. Matrices $A(\cdot)\in\mathbb{R}^{n\times n}$, 
and $B(\cdot)\in\mathbb{R}^{n\times m}$
%and $B_w(\cdot)\in\mathbb{R}^{n\times s}$
are possibly time-dependent appropriately 
defined system's matrices, and the initial state $X(0)$ is assumed to be known.
We assume that %the components $W_k(t)$ of $W(t)$ 
%have arbitrary probability distributions and that 
$W(0),\ldots,W(t)$ are mutually independent random vectors for all time instants $t$.
%Moreover, we restrict the process noise by the following assumption.
%
%\begin{assumption}
%\label{ass:support}
%For any $k\in\{1,\dots,n\}$ and $t\in\mathbb N$, the process noise $W_k(t)$ 
%has the support $I_{W_k(t)}$ and first moment\footnote{The 
%expected value of a random variable $X$ with support $\mathcal{D}$ and 
%the cumulative distribution function $F$ is defined as $\mathbb{E}[X] = \int_{\mathcal{D}}xdF(x)$. 
%The expectation exists if the integral is well-defined and yields a finite value.}
%$\mathbb{E}[W_k(t)]$. Its support is an interval $I_{W_k(t)}= [a_k,b_k]$ and
%its first moment $\mathbb{E}[W_k(t)]$ belongs to the interval $\mathbb{M}_{{W_k(t)}} = [c_k,d_k]$,
%with known quantities
%$a_k,b_k,c_k,d_k\in\mathbb{R}$.
%\end{assumption}
%
%\begin{remark}
%Based on interval arithmetic, for two arbitrary 
%intervals $[a,b]$ and $[c,d]$, and constants $\lambda,\gamma\in\mathbb{R}$, 
%we have $[a,b]+[c,d] = [a+c,b+d]$ and 
%$\lambda \cdot [a,b] + \gamma= [\min(\lambda a, \lambda b) + \gamma,\max(\lambda a, \lambda b)+\gamma]$.
%\label{rem:interval}
%\end{remark}
%\Sadegh{
Note that, for any $t\in\mathbb N$, the state-space model \eqref{eq:discsys} provides 
the following explicit form for $X(\tau)$, $\tau\ge t$, as a function of $X(t)$, input $u(\cdot)$, 
and the process noise $W(\cdot)$:
\begin{equation}
\label{eq:state}
X(\tau) = \Phi(\tau,t)X(t) + \sum_{k=t}^{\tau-1}\Phi(\tau,k+1) \left(B(k)u(k) + W(k)\right),
\end{equation}
where $\Phi(\cdot,\cdot)$ is the \emph{state transition matrix} of model \eqref{eq:discsys} defined as
\begin{equation*}
\Phi(\tau,t) = \begin{cases}
A(\tau-1)A(\tau-2)\ldots A(t) & \tau>t\ge 0\\
\mathbb I_n & \tau = t\ge 0,
\end{cases}
\end{equation*}
with $\mathbb I_n$ being the identity matrix.
%}
%The explicit form of $X(\cdot)$ in \eqref{eq:state} and the above assumptions on the process noise $W(\cdot)$
%indicate that, for the observed value of $X(t)$ as $x(t)$, $X(\tau)$ is a random variable with the following 
%interval of support and the first moment interval
%\begin{equation}
%I_{X(\tau)} = [\bar{a}_\tau + \bar{C}_\tau,\bar{b}_\tau+\bar{C}_\tau], \;\; \mathbb{M}_{X(\tau)} = [\bar{c}_\tau+\bar{C}_\tau,\bar{d}_\tau+\bar{C}_\tau]
%\label{eq:interval}
%\end{equation}
%where
%$\bar{C}_\tau = \Phi(\tau,t)x(t) + \sum_{k=t}^{\tau-1}\Phi(\tau,k+1) B(k)u(k)$, and $\bar{a}_\tau,\bar{b}_\tau,\bar{c}_\tau$ 
%and $\bar{d}_\tau$ are weighted sum of $a_k,b_k,c_k,d_k,\;k\in\mathbb N$, 
%obtained by using interval arithmetics mentioned in Remark \ref{rem:interval}.

For a fixed positive integer $N$, and a given time instant $t\in\mathbb{N}$, 
let $\tilde{u}(t:N)=[u(t), u(t+1),\ldots,u(N-1)]$ 
(matrix $\tilde{W}(t:N)$ is defined similarly). A \emph{finite stochastic run} of system 
\eqref{eq:discsys} for the time interval $[t:N]$ is defined as $\Xi(t:N) = X(t)X(t+1)\ldots X(N)$, 
which is a finite sequence of states satisfying \eqref{eq:state}. Since each state $X(\tau)$ depends on $X(t),\tilde{u}(t:N)$, and $\tilde{W}(t:N)$, we can rewrite $\Xi(t:N)$ 
in a more elaborative notation as $\Xi_N(X(t),\tilde{u}(t:N),\tilde{W}(t:N))$.
%\Sadegh{
Analogously, we define an \emph{infinite stochastic run} $\Xi = X(t)X(t+1)X(t+2)\ldots$ as an infinite sequence of states.
%}
Stochastic runs will be used in Section \ref{sec:stl} to define the 
system's specifications.
%\subsection{Normal distribution}
%\label{sec:normal}
%In many control applications the disturbance is normally distributed.
%%\Sadegh{
%Such disturbances do not satisfy Assumption \ref{ass:support} but provide more information about the statistics of the runs of the system, thus we address this class separately.
%\begin{assumption}
%\label{ass:normal}
%Random vectors $W(0),W(1),W(2),\ldots$ are mutually independent. 
%For any $t\in\mathbb N$, $W(t)$ is normally distributed with mean 
%$\mathbb E[W(t)] = 0$ and covariance matrix $\Sigma_{W(t)}$, i.e., 
%$W(t)\sim\mathcal{N}(0,\Sigma_{W(t)})$.
%\end{assumption}
%%}
%
%The explicit form of $X(\tau)$ in \eqref{eq:state} and the 
%fact that normal distribution is closed under affine transformations result in normal distribution for
%$X(\tau)$, $\tau\in\mathbb N$,
%under Assumption \ref{ass:normal}.
%Its expected value and covariance matrix with an observed value $x(t)$ of $X(t)$ are
%\begin{align*}
%\mu_\tau &= \Phi(\tau,t)x(t) + \sum_{k=t}^{\tau-1}\Phi(\tau,k+1) B(k)u(k) \text{ and} \\
%\Sigma_\tau &= \sum_{k=t}^{\tau-1}\Phi(\tau,k+1)\Sigma_{W(k)}\Phi(\tau,k+1)^T,
%\end{align*}
%respectively, for $\tau\geq t\geq 0$. 
%
%The stochastic run $\Xi_N(X(t),\tilde{u}(t:N),\tilde{W}(t:N))$ of 
%system \eqref{eq:discsys} for the case of having normally distributed 
%random variables is defined similarly to the one in the previous section. 

\section{Signal Temporal Logic}
\label{sec:stl}
An infinite run of system \eqref{eq:discsys} can be considered as a signal 
$\xi = x(0)x(1)x(2)\dots$, which is a sequence of observed states.
We consider
Signal temporal logic (STL) formulas with bounded-time temporal operators
 defined recursively according to the grammar \cite{MalNic:04}
\begin{equation*}
\varphi ::= \top\mid \pi \mid \neg \varphi \mid\varphi \land \psi  \mid \varphi {\mathcal U}_{[a,b]}\psi
\end{equation*}
where $\top$ is the \emph{true} predicate; $\pi$ is a predicate whose truth value is determined by the sign
of a function, i.e. $\pi = \{\alpha(x)\ge 0\}$ with $\alpha:\mathbb R^n\rightarrow\mathbb R$ being an affine function of state variables;
$\psi$ is an STL formula;
$\neg$ and $\land$ indicate negation and conjunction of formulas;
and ${\mathcal U}_{[a,b]}$ is the \emph{until} operator with $a,b\in\mathbb{R}_{\ge 0}$.
A run $\xi$ satisfies $\varphi$ at time $t$, 
denoted by $(\xi,t) \models \varphi$, if the sequence 
$x(t)x(t+1)\ldots$ satisfies $\varphi$. Accordingly, 
$\xi$ satisfies $\varphi$, if
$(\xi,0) \models \varphi$.

Semantics of STL formulas are defined as follows.
Every run satisfies $\top$.
The run $\xi$ satisfies $\neg \varphi$ if it does not satisfy $\varphi$;
it satisfies $\varphi \land \psi$ if both $\varphi$ and $\psi$ hold.
For a run $\xi = x(0)x(1)x(2)\ldots$ and a predicate  $\pi=\{\alpha(x)\ge 0 \}$, we have
$(\xi,t) \models \pi$
if $\alpha(x(t)) \ge 0$.
Finally,
$(\xi,t) \models \varphi{\mathcal U}_{[a,b]}\psi$ if
$\varphi$ holds at every time step starting from time $t$ before $\psi$ holds, and 
additionally $\psi$ holds at some time instant between
$a+t$ and $b+t$.
Additionally, we derive the other standard operators as follows.
 \emph{Disjunction} $\varphi \lor \psi:=\neg(\neg\varphi\land\neg\psi)$,
the \emph{eventually} operator as $\F_{[a,b]}\varphi := \top{\mathcal U}_{[a,b]} \varphi$,
and the \emph{always} operator as $\G_{[a,b]}\varphi:=\neg \F_{[a,b]}\neg\varphi$.

Thus $(\xi,t) \models \F_{[a,b]} \varphi$ if $\varphi$ holds at some time instant between $a+t$ and $b+t$ and
$(\xi,t) \models \G_{[a,b]} \varphi$ if $\varphi$ holds
at every time instant  between $a+t$ and $b+t$.

\smallskip\noindent\textbf{Formula Horizon.}
The \emph{horizon} of an STL formula $\varphi$ is the smallest $n\in\mathbb N$ such that the following holds for all signals $\xi =x(0)x(1)x(2)\ldots$ and $\xi' =x'(0)x'(1)x'(2)\ldots$:
\begin{align*}
\text{If }x(t+i) = x'(t+i) \text{ for all }i\in\{0,\ldots,n\}\\
\qquad\qquad \text{Then } (\xi,t) \models \varphi  \text{ iff } (\xi',t) \models \varphi.
\end{align*}
Thus, in order to determine  whether  a signal $\xi$ satisfies an STL formula $\varphi$,
we can restrict our attention to the signal prefix $x(0), \ldots, x(\Delta)$ where
$\Delta$ is the horizon of $\varphi$.
This horizon can be upper-approximated by a bound,
denoted by $\text{len}(\varphi)$,
defined to be 
the maximum over the sums of all nested upper bounds on the temporal operators.
Formally, $\text{len}(\varphi)$ is defined recursively as:
\begin{align*}
\varphi &:=  \top \Rightarrow \text{len}(\varphi) = 0, \qquad
\varphi :=  \pi \Rightarrow \text{len}(\varphi) = 0, \\
\varphi &:= \neg\varphi_1 \Rightarrow \text{len}(\varphi) = \text{len}(\varphi_1), \\
\varphi &:= \varphi_1 \land \varphi_2 
 \Rightarrow \text{len}(\varphi) = \max(\text{len}(\varphi_1),\text{len}(\varphi_2)), \\
 \varphi &:= \varphi_1~{\mathcal U}_{[a,b]}~\varphi_2 \Rightarrow \text{len}(\varphi) = b + \max(\text{len}(\varphi_1),\text{len}(\varphi_2)),
\end{align*}
where $\varphi_1,\varphi_2$ and $\psi$ are STL formulas. 
For example, for $\varphi = \square_{[0, 4]} \F_{[3, 6]}\pi$, we have 
$\text{len}(\varphi) = 4+6=10$.
For a given STL formula $\varphi$, it is possible to verify that $\xi\models\varphi$ 
using only the finite run $x(0)x(1)\ldots x(N)$, where $N$ is equal to $\text{len}(\varphi)$.

\smallskip\noindent\textbf{STL Robustness.}
In contrast to the above Boolean semantics, the quantitative semantics of STL \cite{req_mining_hscc2013} assigns to each formula $\f$ a real-valued function $\r^\f$ of signal $\xi$ and $t$ such that
$\r^\f(\xi,t) > 0$ implies $(\xi,t) \models \f$.
Robustness of a formula $\varphi$ with respect to a run $\xi$ at time $t$
 is defined recursively as
\begin{align*}
\r^{\top}(\xi,t) &=  +\infty,\\
\r^{\pi}(\xi,t)&= \alpha(x(t)) \text{ with } \pi = \{\alpha(x)\ge 0\},\\
\r^{\neg \varphi}(\xi,t) &=  -\r^{\varphi}(\xi,t)\\
\r^{\varphi \land \psi}(\xi,t)&= \min (\r^{\varphi}(\xi,t),\r^{\psi}(\xi,t)),\\
\r^{\varphi\hspace{0.05cm}{\mathcal U}_{[a,b]}\psi}(\xi,t)&\!=\!  \max_{i\in [a, b]} \big(\min (\r^{\psi}(\xi,t+i)\!,\! \min_{j \in [0,i)} \r^{\f}(\xi,t+j))\big),
\end{align*}
where $x(t)$ refers to signal $\xi$ at time $t$.
The robustness of the derived formula $\F_{[a,b]}\varphi$ can be worked out to
be $\r^{\F_{[a,b]} \varphi}(\xi,t)= \max_{i\in [a, b]}\r^{\varphi}(\xi,t+i)$; and
similarly for $\G_{[a,b]}\varphi$ as 
$\r^{\G_{[a,b]} \varphi}(\xi,t) = \min_{i\in [a, b]}\r^{\varphi}(\xi,t+i)$.
The robustness of an arbitrary STL formula is computed recursively on the structure of the 
formula according to the above definition, by propagating the values of the functions associated with
each operand using min and max operators.

\smallskip\noindent\textbf{STL Robustness for Stochastic Runs.}
With focus on stochastic runs $\Xi = X(0)X(1)X(2)\dots$ and using the bound of a formula $\varphi$, the finite stochastic run
$\Xi(t:t+N)=X(t)X(1)\ldots X(t+N)$
with $N = \text{len}(\varphi)$ is sufficient to study probabilistic properties of $(\Xi,t)\models\varphi$.
Analogous to the definition of robustness for deterministic run, we can define
\emph{stochastic robustness} $\rho^{\varphi}(\Xi,t)$ of a formula $\varphi$ 
with respect to the run $\Xi$ for times $t$ 
with the stochastic robustness being dependent on
$\Xi(t:t+N)$ and $\varphi$.

Note that a general STL formula $\varphi$ consists of several Boolean and/or temporal operators. 
Due to the system dynamics \eqref{eq:discsys}, the stochastic run $\Xi(t:t+N)$ and $\rho^{\varphi}(\Xi(t:t+N),t)$ are 
both functions of $\tilde{W}(t:t+N)$.
Therefore, $\rho^{\varphi}(\Xi(t:t+N),t)$ is a random variable since affine operators, maximization 
and minimization are measurable functions.

The above definition of robustness implies that, for any formula $\varphi$ and constant 
$\delta\in(0,1)$, a stochastic run $\Xi=X(0)X(1)X(2)\ldots$ 
satisfies $\varphi$ with probability greater than or equal to $1-\delta$, 
if the finite stochastic run $\Xi(0:N)=X(0)X(1)\ldots X(N)$ with $N\ge \text{len}(\varphi)$ satisfies 
$\text{Pr}\left[\r^{\varphi}(\Xi(0:N),0)>0\right]\ge 1-\delta$.

\section{Control Problem Statement}
\label{sec:control}
For system \eqref{eq:discsys} with a given initial state $X(0)=x_0$,
the stochastic disturbance vector $W(t)$ with a given probability distribution, 
STL formulas $\varphi$ and $\psi$, and some constant $N \geq \max(\text{len}(\varphi),\text{len}(\psi))$, the control 
problem can be defined as finding an optimal input sequence $\tilde{u}^\ast(0:N)=[u^\ast(0),\ldots,u^\ast(N-1)]$, 
that minimizes the expected value of a given objective function $J(\tilde{X}(0:N+1),\tilde{u}(0:N))$ subject to constraints 
on states and input variables, where $\tilde{X}(0:N+1)\!=\![X(0),X(1),\ldots,X(N)]$.
%such 
%that $X_1,\ldots,X_N$ are random state variables.
%Due to the probabilistic nature of the cost function, we minimize its expected value.
This optimization 
problem for the time interval $0 \leq t< N$ can be defined as 
\begin{subequations}
\label{eq:open}
\begin{align}
%\begin{array}{ll}
& \min_{\tilde{u}(0:N)}\;\; \mathbb{E}\left[J(\tilde{X}(0:N+1),\tilde{u}(0:N))\right]\quad \mbox{s.t. } \\%[2ex]
& \;\;
%X(t) = A^tx_0 + \sum_{j=0}^{t-1}A^{t-j-1}\left(Bu_j + B_wW_j\right),\;\;1\leq t\leq N \\%[1.5ex]
X(t) = \Phi(t,0)x_0 + \sum_{k=0}^{t-1}\Phi(t,k+1) \left(B(k)u(k) + W(k)\right),\\
%& \hspace{2cm}\;\;\text{for}\;1\leq k \leq N \nonumber\\%[1.5ex]
%&\text{Pr}\left[{\Xi_N}(x(0),u_{\text{old}},w_{\text{realized}},\tilde{u}(t),\tilde{W}(t)) \models \varphi \right] \geq 1-\delta_t  \label{eq:finopt}\\%
&\;\;\text{Pr}\left[{\Xi_N}(x_0,\tilde{u}(0:N),\tilde{W}(0:N)) \models \varphi \right] \geq 1-\delta,  \label{eq:const}\\[1.5ex]
&\;\;\tilde{u}(0:N)\in {U^N}, %[1.5ex]
%&\;\;X(t+k) \sim\mathcal{N}(\mu_k,\Sigma_k) \;\;\text{for}\;1\leq k\leq N \nonumber\\%[1.5ex]
%&W(\ell)\sim\mathcal{N}({\bf 0},\Sigma_{W(\ell)})\;\;\text{for}\;t\leq \ell \leq N \nonumber
%\end{array}
\end{align}
\end{subequations}
where $\mathbb{E}[\cdot]$ denotes the expected value operator and 
%$\tilde{W}(0\!:\!N) = [W_0,\ldots,W_{N-1}]$ is a stochastic vector with 
%mutually independent components, and each $W_t,\;0\leq t \leq N-1$ is a random 
%vector with a given probability distribution;
the closed set $U^N\in\mathbb{R}^{mN}$ specifies the 
constraint set for the input variables.
The chance constraints \eqref{eq:const} state
that for a given $\delta\in(0,1)$, stochastic runs of the system should satisfy $\varphi$ with a probability 
greater than or equal to $1-\delta$.
We consider the following objective function
\begin{equation}
\label{eq:obj_decomp}
J(\tilde{X}(0:N+1),\tilde{u}(0:N)) := J_{\text{robust}}(\tilde{X}(0:N+1)) + J_{\text{in}}(\tilde{u}(0:N)),
\end{equation}
where the first term 
$J_{\text{robust}}(\tilde{X}(0:N+1)) := -\rho^{\psi}(\tilde X(0:N+1),0)$ represents
the negative value of the robustness function on STL formula $\psi$ at time $0$ 
that needs to be minimized;
and the second term $J_{\text{in}}(\tilde{u}(0:N))$ reflects 
%and $J_{\text{state}}(\tilde{u}(t))$, which represent 
the cost on the input variables and can be defined as a linear or a quadratic function. 

Note that optimization problem \eqref{eq:open} is an open-loop optimization 
problem and we cannot incorporate any information related to the process noise 
or the states of the system.
\begin{remark}
The above problem formulation enables us to distinguish the following two cases:
we put the robustness of a formula in the objective function if the system is required 
to be robust with respect to satisfying the formula;
we encode the formula in the probabilistic constraint if only satisfaction of the formula is important.
\end{remark}
%Since in \eqref{eq:open} we have a minimization problem, $J_{\text{robust}}(\tilde{X}(1:N))$ 
%is defined as the negative value of $\rho^{\varphi}(\tilde{X}(1:N),1)$. %$R_{t}^{\varphi}$ 
%Moreover, we consider the expected value of $J(\tilde{X}(1:N),\tilde{u}(1:N))$ %and  $J_{\text{state}}(\tilde{u}(t))$
%in the optimization problem, since by Remark \ref{rem1}, $\rho^{\varphi}(\tilde X(1:N),1)$ %$R_{t}^{\varphi}$ 
%is a random variable. 
%Hence, $J(\tilde{X}(0:N),\tilde{u}(0:N-1)) = \mathbb{E}\right[J_{\text{robust}}(\tilde{X}(0:N))\right] + J_{\text{in}}(\tilde{u}(0:N-1))$. 
%\begin{remark}
%It is possible that the formula $\varphi$ in $J_{\text{robust}}(\tilde{X}(0:N))$ be different 
%than the formula we have in the probabilistic constraints.
%\end{remark}

\subsection{Model Predictive Control}
\label{sec:mpc}
To obtain a more well-behaved control input and to include the information 
about the disturbances, instead of solving the optimization problem 
\eqref{eq:open}, we apply \emph{shrinking horizon model predictive control} (SHMPC),  
which can be summarized as follows: at time step one, we obtain 
a sequence of control inputs with length $N$ (the prediction horizon) to 
optimize the cost function; %based on the predicted behaviour of the system; 
then we only use the first component of the obtained control sequence 
to update the state of the system (or to observe the state in the case of having 
a stochastic disturbance); in the next time step, we fix the first component of the 
control sequence by the first component of the previously calculated 
optimal control sequence and hence, we only optimize for a control sequence 
of length $N-1$. As such, at each time step, the size of the control sequence decreases by 1. 
Note that in this approach, unlike the receding horizon approach, we do not shift 
the horizon at each time step and the end point of the prediction window is fixed. 
MPC allows us to incorporate the new information we obtain about the state 
variables and the disturbance signal, at each time step and hence, to improve the 
control performance comparing with the one of solving the 
open-loop optimization problem \eqref{eq:open}. 

A natural choice for the prediction horizon $N$ in this setting 
with STL constraints is to set it to be greater than or 
equal to the bound of the formula $\varphi$, i.e., $\text{len}(\varphi)$, 
which was defined in the previous section.
% and is defined as the maximum over the sums of 
%all nested upper bounds on the temporal operators. 
This choice 
provides a conservative maximum trajectory length required to 
make a decision about the satisfiability of the formula. 
%For an STL formula $\varphi$, we can define $\text{len}(\varphi)$ as
%\begin{align*}
%\varphi &:= T\mid F \mid \alpha>0 \mid \alpha \leq 0 \Rightarrow \text{len}(\varphi) = 0;\\
%\varphi &:= \varphi_1 \land \varphi_2 \Rightarrow \text{len}(\varphi) = \max(\text{len}(\varphi_1),\text{len}(\varphi_2)); \\
%\varphi &:= \varphi_1 \lor \varphi_2 \Rightarrow \text{len}(\varphi) = \max(\text{len}(\varphi_1),\text{len}(\varphi_2)); \\
%\varphi &:= \G_{[a,b]} ~\psi \Rightarrow \text{len}(\varphi) = b + \text{len}(\psi); \\
% \varphi &:= \varphi_1~{\mathcal U}_{[a,b]}~\varphi_2 \Rightarrow \text{len}(\varphi) = b + \max(\text{len}(\varphi_1),\text{len}(\varphi_2));
%\end{align*}
%where $\varphi_1,\varphi_2$ and $\psi$ are STL formulas. 
%For example, for $\varphi := \square_{[0, 4]} \F_{[3, 6]} \alpha $, we have 
%$\text{len}(\varphi) = 4+6=10$. 

%\Sadegh{
Let $\bar{X}(0:t:N+1)\!=\![x(0),\ldots,x(t),X(t+1),\ldots,X(N)]$
where $x(0),\ldots,x(t)$ are the observed states up to time $t$
and $X(\tau)$ is the random state variable at time $\tau> t$,
%($\bar{W}(0:t:N)$ is defined similarly).
%which the intervals of support and the moment intervals are defined similar to \eqref{eq:interval}, 
also let $\bar{W}(0:t-1:N)\!=\![w(0),\ldots,w(t-1),W(t), W(t+1),\ldots,W(N-1)]$ 
such that $w(0),\ldots,w(t-1)$ are the noise realizations up to time $t-1$
and $W(\tau)$ are random vectors with given probability 
distributions at time $\tau\ge t$.
%and the components of $W(\ell)$, i.e., $W_k(\ell),\;k=1,\dots,n_x$, have 
%the interval of support $I_{W_k(\ell)} = [a_k,b_k]$ and the moment interval 
%$I_{m_{W_k(\ell)}} = [c_k,d_k]$ for $a_k,b_k,c_k,d_k\in\mathbb{R}$. %normally distributed random vectors 
Define $\bar{u}(0:t-1:N)=[u^\ast(0),\ldots,u^\ast(t-1),u(t),\ldots,u(N-1)]$ to be the vector of input variables 
such that $u^\ast(0),\ldots,u^\ast(t-1)$ are the obtained optimal control inputs up to time $t-1$ 
and $u(t),\ldots,u(N-1)$ are the input variables that need to be determined 
at time $t\geq 0$.% and let $\varphi$ denote the STL formula. 
Given STL formula $\varphi$, observations of state variables $x(0),x(1),\ldots,x(t)$, and
designed control inputs $u^\ast(0),\ldots,u^\ast(t-1)$ of system \eqref{eq:discsys},
the \emph{stochastic SHMPC} optimization problem minimizes the expected value of the cost function
%}
%vector of random variables 
%at each time instant $t$,   
%initial state $X(0)\sim\mathcal{N}(\mu_0,I_0)$,
%the initial state $X(0)=x_0$,
 %$X(t)\sim\mathcal{N}(\mu_t,\Sigma_t)$, 
%and the cost function
\begin{align*}
J(\bar{X}(0:t:&N+1),\bar{u}(0:t-1:N)) =\\
&J_{\text{robust}}(\bar{X}(0:t:N+1)) + J_{\text{in}}(\bar{u}(0:t-1:N)),
\end{align*} 
at each time instant $0 \leq t< N$, as follows
%$J(t) = \mathbb{E}\left[J_{\text{robust}}(\tilde{X}(t))\right] + J_{\text{in}}(\tilde{u}(t))$, % at each time step $t$,  
\begin{subequations}
 \label{eq:finopt}
\begin{align}
%\begin{array}{ll}
%\min_{u(t),\ldots,u(N-1)}
&\min_{\tilde u(t:N)}  \;\;\mathbb{E}\left[J(\bar{X}(0:t:N+1),\bar{u}(0:t-1:N)) \right]\;\;\mbox{ s.t. }\label{eq:exp_obj}\\
% \mathbb{E}\left[J_{\text{robust}}(\bar{X}(t)) \right] + J_{\text{in}}(\bar{u}(t))) \nonumber\\%[2ex]
&
%X(t+\ell) = A^{\ell}x(t) + \sum_{j=0}^{\ell-1}A^{\ell-j-1}(Bu(t+j) + W(t+j) )  \\
X(\tau) = \Phi(\tau,t)x(t) + \sum_{k=t}^{\tau-1}\Phi(\tau,k+1) \left(B(k)u(k) + W(k)\right),\nonumber\\
%& \;\;X(k) = A^kx(0) + \sum_{j=0}^{k-1}A^{k-j-1}\left(Bu(j) + W(j)\right)\nonumber\\%[1.5ex]
& \hspace{5.4cm}\;\;\text{for}\;t \leq \tau \leq N\\%[1.5ex]
&\text{Pr}\left[\Xi_N(x_0,\bar{u}(0:t-1:N),\bar{W}(0:t-1:N)) \models \varphi \right] \geq 1-\delta_t  \label{eq:probconst}\\%
%&\text{Pr}\left[\Xi_N(x(t),\bar{u}(t),\bar{W}(t)) \models \varphi \right] \geq 1-\delta_t  \label{eq:finopt}\\[1.5ex]
&\tilde{u}(t:N)\in U^{N-t}, %[1.5ex]
%&\;\;X(t+k) \sim\mathcal{N}(\mu_k,\Sigma_k) \;\;\text{for}\;1\leq k\leq N \nonumber\\%[1.5ex]
%&W(\ell)\sim\mathcal{N}({\bf 0},\Sigma_{W(\ell)})\;\;\text{for}\;t\leq \ell \leq N \nonumber
%\end{array}
\end{align}
\end{subequations}
where the expected value $\mathbb{E}[\cdot]$ in \eqref{eq:exp_obj} is conditioned on observations $\tilde X(0:t+1) = [x(0),\ldots,x(t)]$ and
%where %$\mathbb{E}[\cdot]$ denotes the expected value operator, 
$\delta_t = \delta/N$ for all $t$.
Optimization variables in \eqref{eq:finopt} are the control inputs $\tilde u(t:N) = [u(t),\ldots,u(N-1)]$.
We indicate the argument of minimum by $\tilde u_{opt}(t:N) = [u_{opt}(t),\ldots,u_{opt}(N-1)]$.
 
The complete procedure of obtaining an optimal control sequence using 
SHMPC is presented in Algorithm \ref{alg1}.
Lines 3 to 8 of this algorithm specify the inputs and the parameters used in the algorithm and line 20 specifies 
the output. In line 10, the SHMPC optimization procedure starts for 
each time step $t\in[0,N-1]$. 
In line 11, we solve the optimization problem \eqref{eq:finopt} to 
obtain an optimal control sequence for time instance $t$. In lines 
12 to 16, we check whether the obtained solution satisfies the STL specifications or not; if yes, 
assign the first component of the obtained input sequence to $u^\ast(t)$, 
and if not, the optimization procedure will be terminated.
%We assign the first component of the obtained input sequence to $\tilde{u}^\ast(t)$ 
%in line 19. 
Finally, in line 17, we apply $u^\ast(t)$ to the system \eqref{eq:discsys} and 
% the noise realization $w(t)$
%the first element of the optimal control sequence $\tilde{u}(t:N-1)$, i.e., $u(t)$, 
observe the states at time instant $t$. 
\begin{algorithm}
\caption{}
\label{alg1}
\begin{algorithmic}[1]
\Procedure{Chance-Constrained Stochastic SHMPC}{}
\BState \emph{input}:
\State STL formulas $\varphi$ and $\psi$ and a fixed $\delta\in(0,1)$
\BState \emph{parameters}:
\State $N \geq \max(\text{len}(\varphi),\text{len}(\psi))$
\State probability distributions of the process noise $\{W(t),\;t=0,\ldots,N-1\}$
%\State $N_t$ the shrinking horizon counter and set $N_0=N$ 
%\State $u_{\text{old}} = [];$
%\State $w_{\text{realized}} = [];$
\State Initial state $x_0$
%\State $X(0)\sim\mathcal{N}(\mu_0,I_0)$
\State $\delta_t=\delta/N$ for $t=0,\ldots,N-1$ %and $\varphi_0=\varphi$
%\BState \emph{Initialize}:
%\State $u_{\text{old}} = []$;
%\State $w_{\text{realized}} = []$;
\BState \emph{closed-loop optimization problem}:
\For{$t=0$; $t < N$; $t=t+1$ }
\State Compute $\tilde u_{opt}(t:N) = [u_{opt}(t),\ldots,u_{opt}(N-1)]$ by solving the optimization problem \eqref{eq:finopt} 
\If{ the solution of optimization problem \eqref{eq:finopt} exists} 
 \State Let $u^\ast(t) := u_{opt}(t)$;
 \Else
 \State Return {\it Infeasible Solution} and terminate the optimization procedure;
 \EndIf
%\State $\bar{u}^\ast(t) = u(t)$;
\State Apply $u^\ast(t)$ to the system and observe the value of $X(t+1)$ as $x(t+1)$
%to observe $x(t+1) = Ax(t) + Bu(t) + w(t)$;
%\State $N_{t+1} = N_t - 1$; 
% \State $u_{\text{old}} = [u_{\text{old}};u(t)]$
% \State $w_{\text{realized}} = [w_{\text{realized}};w(t)]$
%\State $\delta_{t+1} = \delta_{\text{new}}$;
%\State $\varphi_{t+1} = \text{UPDATE}(\varphi)$;%\varphi_{\text{new}}$;
\EndFor
\BState \emph{output}:
\State $u^\ast(0:N) = [u^\ast(0),\ldots,u^\ast(N-1)]$
\EndProcedure
\end{algorithmic}
\end{algorithm}

We show in the following theorem that in Algorithm \ref{alg1}, by using the 
shrinking horizon technique, the specific choice of $\delta_t$, and keeping track of the 
control inputs and observed states, the closed-loop system %$x(1)x(2)\dots x(N+1)$ %the obtained optimal control sequence for the entire simulation, i.e., $\tilde{u}^\ast$ 
satisfies the STL specification $\varphi$ with probability greater than or equal to $1-\delta$. %$\text{Pr}\left[ (x(1)x(2)\dots x(N+1))\models \varphi \right]\geq 1-\delta$.
\begin{theorem}
Given
%random vectors $W(t)$ for $t=0,\ldots,N-1$,
a constant $\delta\in(0,1)$ and an STL formula $\varphi$,
if the optimization problems in Algorithm \ref{alg1} are all feasible,
the computed optimal control sequence $\tilde{u}^\ast(0:N) = [u^\ast(0),\ldots,u^\ast(N-1)]$
ensures that the closed-loop satisfy $\varphi$ with probability 
greater than or equal to $1-\delta$.
\end{theorem}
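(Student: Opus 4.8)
The plan is to reduce the closed-loop guarantee to the per-step chance constraints \eqref{eq:probconst} by conditioning on the observation history. Let $\mathcal{H}_t$ denote the information available after applying $u^\ast(t-1)$ and observing $x(t)$, i.e.\ the $\sigma$-algebra generated by $x(0),\ldots,x(t)$ (equivalently by $w(0),\ldots,w(t-1)$). Each applied input $u^\ast(t)=u_{opt}(t)$ is $\mathcal{H}_t$-measurable, being the first component of the optimizer of \eqref{eq:finopt}, which depends only on the realized history. Let $G$ be the event that the closed-loop run $x(0)x(1)\ldots x(N)$, generated by the \emph{actually applied} inputs $u^\ast(0),\ldots,u^\ast(N-1)$ and the realized noise, satisfies $\varphi$; the goal is $\text{Pr}[\neg G]\le\delta$. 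What feasibility buys us is this: for every realizable history the step-$t$ optimizer of \eqref{eq:finopt} exists and obeys \eqref{eq:probconst}, so the conditional probability of $\neg\varphi$, taken over the remaining noise $W(t),\ldots,W(N-1)$ with the past frozen and the future inputs fixed to that optimizer, is at most $\delta_t=\delta/N$. The one thing to watch is a genuine mismatch: \eqref{eq:probconst} treats the future inputs as fixed open-loop decisions, whereas in execution they are re-optimized, so \eqref{eq:probconst} bounds the conditional failure probability of an \emph{open-loop continuation}, not directly of the closed loop.

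The cleanest route I would take isolates the terminal step. Set $q_t:=\text{Pr}[\neg G\mid\mathcal{H}_t]$, the closed-loop conditional failure probability. Since $G$ is a fixed event and $(\mathcal{H}_t)_t$ is increasing, $(q_t)_t$ is a Doob martingale with $q_0=\text{Pr}[\neg G]$ and $\text{Pr}[\neg G]=\mathbb{E}[q_{N-1}]$. At $t=N-1$ all inputs except $u(N-1)$ are already fixed and no further re-optimization occurs, so the closed-loop continuation coincides with the open-loop continuation evaluated in \eqref{eq:probconst}: satisfaction of $\varphi$ depends only on the observed $x(0),\ldots,x(N-1)$ and on $X(N)$, which is determined by $u^\ast(N-1)=u_{opt}(N-1)$ and the single remaining noise $W(N-1)$. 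Hence the terminal constraint yields $q_{N-1}\le\delta/N$ for every realizable history, and therefore $\text{Pr}[\neg G]=\mathbb{E}[q_{N-1}]\le\delta/N\le\delta$.

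The route that accounts directly for the budget $\delta_t=\delta/N$ is a union bound over the $N$ re-planning steps. Here I would assign the budget $\delta/N$ to each step $t=0,\ldots,N-1$, bound by \eqref{eq:probconst} the failure mass that is irrevocably committed once the noise $W(t)$ at that step is realized (conditioned on the history being admissible so far), and combine these $N$ conditional bounds through the law of total probability and the tower property to get $\text{Pr}[\neg G]\le N\cdot(\delta/N)=\delta$.

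The main obstacle is exactly the open-loop-versus-adaptive mismatch: a naive chaining of the per-step conditional bounds is invalid because the continuation certified at step $t$ is discarded at step $t+1$. The argument must therefore either localize to the terminal step, where the certified open-loop continuation and the executed trajectory agree, or account step by step, via the tower property, only for the failure mass that cannot be repaired by later re-optimization. A secondary technical point is measurability: one must verify that the adaptively generated control sequence and the induced state trajectory are measurable with respect to $(\mathcal{H}_t)_t$, so that the conditional probabilities and the martingale manipulation are well defined. This holds because affine maps together with $\min$ and $\max$ are measurable (as already used for the robustness functional) and each $u^\ast(t)$ is a measurable function of $\mathcal{H}_t$.
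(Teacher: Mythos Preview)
Your second route—the per-step union bound—is the paper's approach: the paper argues, somewhat informally, that ``the probability that a trajectory of the system violates $\varphi$ at time step $t$ is at most $\delta_t$'' and sums these to $\sum_{t=0}^{N-1}\delta_t=\delta$. Your first route, however, is genuinely different and in fact sharper. By localizing to the terminal step $t=N-1$, where the certified open-loop continuation and the executed closed loop coincide, you obtain $\text{Pr}[\neg G]=\mathbb E[q_{N-1}]\le\delta_{N-1}=\delta/N$, not merely $\delta$. This reveals that, under the theorem's blanket feasibility hypothesis, the chance constraints at steps $t<N-1$ contribute nothing to the stated bound (they may of course still matter for recursive feasibility or for degradation when a later step turns out infeasible). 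Your explicit identification of the open-loop versus adaptive mismatch is also more careful than the paper's treatment: the paper does not say what ``violates $\varphi$ at time step $t$'' means for a formula evaluated over the full horizon, nor does it address the fact that the future inputs appearing in \eqref{eq:probconst} at step $t$ are discarded and re-optimized. Your terminal-step argument sidesteps this difficulty cleanly; the union-bound route, as you yourself flag, needs an additional decomposition (e.g.\ into ``failure mass first committed at step $t$'') to be made fully rigorous.
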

\begin{proof}
%In Algorithm \ref{alg1}, we choose $\delta_t=\delta/N$ for $t=0,\ldot\textsl{}s,N-1$.
Considering the chance constraint \eqref{eq:probconst}
%in optimization problem \eqref{eq:finopt}, 
the probability that a trajectory of the system violates $\varphi$ at time step $t$ 
%, i.e., not being able to find an optimal control sequence that guarantees 
%the satisfaction of the STL formula by the obtained trajectory with probability 
%greater than or equal to $1-\delta_t$, 
is at most $\delta_t$.
This implies that the probability of violating $\varphi$ in the time interval 
%$t=0,\dots,t',\;t'\leq N-1$ is less than or equal to $\sum_{t=0}^{t'}\delta_t$.  
%Hence, the probability of having infeasible solutions in the time steps 
$t=0,\ldots,N-1$ is at most $\sum_{t=0}^{N-1}\delta_t = \sum_{t=0}^{N-1}\delta/N
=\delta$, which proves that the optimal control 
sequence $\tilde{u}^\ast = [u^\ast(0),\ldots,u^\ast(N-1)]$ obtained using Algorithm \ref{alg1} 
results in trajectories that satisfy $\varphi$ with probability greater than or equal 
to $1-\delta.$
\end{proof}

Note that in practice, if at each time step a feasible solution is not found, 
by using the previous control value, i.e., by setting $u^\ast(t) = u^\ast(t-1)$, 
we can give the controller a chance to retry in the next time step after 
observing the next state.
\begin{remark}
%\Sadegh{
The choice of $\delta_t=\delta/N$ is completely arbitrary. In general, the positive constants $\delta_t$
can be picked freely with the condition that $\sum_{t=0}^{N-1}\delta_t=\delta$.
\end{remark}

%\Sadegh{
Computation of the solution of the optimization problem \eqref{eq:finopt} requires addressing two main challenges:
a) the objective function \eqref{eq:exp_obj} depends on the optimization variables $\tilde u(t:N)$ and on 
random variables $\tilde W(t:N)$, thus we have to compute
the expected value as a function of these variables;
and
b) the feasible set of the optimization restricted by the chance constraint \eqref{eq:probconst} is in general difficult to characterize.
We propose approximation methods in Sections \ref{sec:obj} and \ref{sec:const} to respectively address these two challenges. 
%}

\section{Approximating the objective function}
\label{sec:obj}
To solve the optimization problem \eqref{eq:finopt}, one needs to calculate 
the expected value of the objective function. One way to do this is via numerical integration methods \cite{DavRab:84}.
However, numerical integration is in general both cumbersome and time-consuming.
%\Samira{To avoid numerical integration, one can obtain an analytic solution by approximating} 
For example, the method of approximating
the density function of the disturbance with piecewise polynomial functions defined on polyhedral 
sets \cite{BueEng:00,Las:98} suffers from scalability issues on top of the induced approximation error. 
Therefore, in this section, we discuss an efficient method that computes an upper bound for the expected value of the objective 
function and then, minimize this upper bound instead.

We discuss computation of such upper bounds for both cases of process noise with 
arbitrary probability distribution and with normal distribution in Sections 
\ref{subsec:arbit_approx_obj} and \ref{subsec:normal_approx_obj}, respectively.
For this purpose, we first provide a canonical form for the robustness function of a STL formula $\psi$,
%}
%In this section we approximate both the objective function and the probability of the 
%satisfaction of STL formulas with a convex function and linear inequalities, respectively.
%
%The robustness function of an STL formula $\psi$ can be written as a canonical form, 
which is the mix-max or max-min of random variables. This result is inspired 
by \cite{DeSvan:02-004}, in which the authors provide such canonical forms for max-min-plus-scaling 
functions.
\begin{theorem}
\label{thm:J_robust_canonical}
For a given STL formula $\psi$,
%and assuming that $\alpha$ is an affine function 
%of state variables of system \eqref{eq:discsys},
the robustness function $\rho^{\psi}(\Xi(0:N),0)$, and hence the function $J_{\text{robust}}(\bar{X}(0:t:N))$,
can be written into a max-min canonical form
\begin{equation}
\label{eq:J_rob_canonical}
%\rho^{\psi}(\Xi(0:N),0) = 
J_{\text{robust}}(\bar{X}(0:t:N)) = 
 \max_{i\in\{1,\ldots, L\}}\min_{j\in\mathbb \{1,\ldots,m_i\}}\left\{\eta_{ij} + \lambda_{ij}\bar{W}(0:t:N)\right\},
\end{equation}
and into a min-max canonical form
\begin{equation}
\label{eq:J_rob_canonical2}
%\rho^{\psi}(\Xi(0:N),0) = 
J_{\text{robust}}(\bar{X}(0:t:N)) =
 \min_{i\in\set{1,\ldots, K}}\max_{j\in\set{1,\ldots,n_i}}\left\{\zeta_{ij} + \gamma_{ij}\bar{W}(0:t:N)\right\},
\end{equation}
for some integers $K,L,n_1,\ldots,n_K,m_1,\ldots,m_L$,
where $\lambda_{ij}$ and $\gamma_{ij}$ are weighting vectors and 
$\eta_{ij}$ and $\zeta_{ij}$ are affine functions of $\bar{u}(0:t:N)$ and $x_0$.
\end{theorem}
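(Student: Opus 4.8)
The plan is to prove the claim by structural induction on the STL formula $\psi$, exploiting the recursive definition of robustness together with the explicit affine dependence of the state on the disturbance given in \eqref{eq:state}. The base case is a predicate $\pi = \{\alpha(x)\ge 0\}$: here $\r^{\pi}(\Xi,\tau)=\alpha(X(\tau))$, and substituting the explicit state form shows that $\alpha(X(\tau))$ is affine in $\bar{W}(0:t:N)$, with linear part $\lambda\bar{W}$ whose weights are constants determined by the transition matrices $\Phi(\cdot,\cdot)$ and the coefficients of $\alpha$, and with the remaining term $\eta$ affine in $\bar{u}(0:t:N)$ and $x_0$. A single affine expression $\eta+\lambda\bar{W}$ is trivially both in max-min form (with $L=m_1=1$) and in min-max form (with $K=n_1=1$), so the base case supplies both representations simultaneously; the degenerate case $\psi=\top$ with $\r^{\top}=+\infty$ is absorbed since $+\infty$ is an identity for $\min$ and absorbing for $\max$.

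For the inductive step I would maintain the stronger hypothesis that every subformula admits \emph{both} a max-min and a min-max canonical form of the stated shape, and show each syntactic operation preserves this. Negation is immediate from $\r^{\neg\varphi}=-\r^{\varphi}$: pushing the minus sign inside turns $\max_i\min_j\{\eta_{ij}+\lambda_{ij}\bar{W}\}$ into $\min_i\max_j\{-\eta_{ij}-\lambda_{ij}\bar{W}\}$, again of the required form with affine constants negated, and symmetrically converts a min-max form into a max-min form. Conjunction is $\r^{\varphi\land\psi}=\min(\r^{\varphi},\r^{\psi})$: the min of two min-max forms is a min-max form obtained by merging the outer index sets. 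The temporal operators reduce to finite nested applications of $\max$ and $\min$ over the finitely many (since $N=\text{len}(\psi)$) time-shifted robustness values; for instance $\r^{\F_{[a,b]}\varphi}(\Xi,t)=\max_{i\in[a,b]}\r^{\varphi}(\Xi,t+i)$ is a max of max-min forms that merges into a single max-min form, and dually $\G_{[a,b]}$ merges min-max forms, while $\mathcal{U}_{[a,b]}$ is the outer-max/inner-min combination already matching \eqref{eq:J_rob_canonical}.

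The step requiring real work, and the one I expect to be the main obstacle, is the conversion between the two canonical shapes, needed whenever an operation naturally yields one form but the induction must supply both. This is handled by the distributive laws $\max(a,\min(b,c))=\min(\max(a,b),\max(a,c))$ and $\min(a,\max(b,c))=\max(\min(a,b),\min(a,c))$. Applied repeatedly, a max-min expression $\max_i\min_j a_{ij}$ is rewritten as $\min_f\max_i a_{i,f(i)}$, where the outer index ranges over all selection functions $f$ that pick one inner term per outer group; this yields a valid min-max form at the cost of a combinatorial increase in the number of terms, which only affects the integers $K,L,n_i,m_i$ and not the structural claim. The remaining bookkeeping is the verification that every affine term $\eta_{ij}+\lambda_{ij}\bar{W}$ retains weights constant in $\bar{W}$ and coefficients affine in $\bar{u}$ and $x_0$ -- a property preserved under $\min$, $\max$, and negation precisely because the robustness recursion never multiplies two state-dependent quantities. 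Since \cite{DeSvan:02-004} establishes exactly this normalization for max-min-plus-scaling functions, I would invoke it to close the induction rather than re-deriving the distributive normalization from scratch.
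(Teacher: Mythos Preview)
Your proposal is correct and follows essentially the same approach as the paper: structural induction on $\psi$ using the explicit affine state representation \eqref{eq:state}, the negation duality $-\max(f_1,f_2)=\min(-f_1,-f_2)$, and the distributive identity $\min(\max(f_1,f_2),\max(g_1,g_2))=\max(\min(f_1,g_1),\min(f_1,g_2),\min(f_2,g_1),\min(f_2,g_2))$ to convert between the two canonical forms. Your write-up is in fact more detailed than the paper's brief sketch, which simply names these two identities and defers the normalization to \cite{DeSvan:02-004}.
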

\begin{proof}
 The proof is inductive on the structure of $\psi$ and uses the explicit form of the states in \eqref{eq:state} utilizing the identities
 $-\max(f_1,f_2) = \min(-f_1,-f_2)$ and
 \begin{align*}
  \min&\left(\max(f_1,f_2),\max(g_1,g_2)\right)=\\
 & \max\left( \min(f_1,g_1),\min(f_1,g_2),
 \min(f_2,g_1),\min(f_2,g_2)\right).
 \end{align*}
 for functions $f_1,f_2,g_1$, and $g_2$.
 \end{proof}
 
\subsection{Arbitrary probability distributions with bounded support}
\label{subsec:arbit_approx_obj}

Suppose the elements of the stochastic 
vector $W(t)$, i.e., $W_k(t),\;k\in\{1,\dots,n\}$ have arbitrary probability 
distribution with known bounded support $I_{W_k(t)}= [a_k,b_k]$ and
its first moment $\mathbb{E}[W_k(t)]$ belongs to the interval $\mathbb{M}_{{W_k(t)}} = [c_k,d_k]$,
with known quantities $a_k,b_k,c_k,d_k\in\mathbb{R}$.
%Moreover, each state variable $X(\tau),\,\,\tau\ge t,$ (cf. \eqref{eq:state}) is also a random variable with 
%the interval of support and the moment interval given by \eqref{eq:interval}. 
Under this assumption, the explicit form of $X(\cdot)$ in \eqref{eq:state}
%and the above assumptions on the process noise $W(\cdot)$
implies that, for the observed value of $X(t)$ as $x(t)$, $X(\tau)$ is a random variable with the following 
interval of support and the first moment interval
\begin{equation}
I_{X(\tau)} = [\bar{a}_\tau + \bar{C}_\tau,\bar{b}_\tau+\bar{C}_\tau], \;\; \mathbb{M}_{X(\tau)} = [\bar{c}_\tau+\bar{C}_\tau,\bar{d}_\tau+\bar{C}_\tau]
\label{eq:interval}
\end{equation}
where
$\bar{C}_\tau = \Phi(\tau,t)x(t) + \sum_{k=t}^{\tau-1}\Phi(\tau,k+1) B(k)u(k)$, and $\bar{a}_\tau,\bar{b}_\tau,\bar{c}_\tau$ 
and $\bar{d}_\tau$ are weighted sum of $a_k,b_k,c_k,d_k,\;k\in\mathbb N$, 
obtained by using interval arithmetics mentioned in Section \ref{notation}.

%\subsubsection{Approximating the objective function} 
%Also, recall that at time $0\leq t \leq N-1$,
The objective function in 
%the optimization problem
\eqref{eq:finopt} can be written as 
$\mathbb{E}\left[J_{\text{robust}}(\bar{X}(0:t:N+1)) \right] + J_{\text{in}}(\bar{u}(0:t-1:N)))$ 
and that $J_{\text{robust}}(\bar{X}(0:t:N+1)) = -\rho^{\psi}(\bar X(0:t:N+1),0)$.
Recall that $\bar{X}(0:t:N+1)=[x(0),\ldots,x(t),X(t+1),\ldots,X(N)]$ with observed 
states $x(0),\ldots,x(t)$ of system \eqref{eq:discsys} and random 
states $X(\tau),\,\,\tau>t$. 
The following theorem shows how we can compute an upper bound for 
$\mathbb{E}[J_{\text{robust}}(\bar{X}(0:t:N+1))]$ based on the canonical form of $J_{\text{robust}}$.
\begin{theorem}
\label{thm:J_rob_can_arb}
For a given STL formula $\psi$, 
$\mathbb{E}\left[ J_{\text{robust}}(\bar{X}(0:t:N+1))\right]$ can be upper bounded by 
$$\max_{i\in\{1,\dots,L\}}\min_{j\in\{1,\dots,m_i\}}(\hat{d}_{ij}+\eta_{ij}),$$
where the constants $\eta_{ij}$, $i\in\{1,\dots,L\}, j\in\{1,\dots,m_i\}$, are affine functions of $\bar{u}(0:t-1:N)$ and $x(0)$,
and $\hat{d}_{ij}$
are weighted sum of $w(0),\ldots,w(t-1)$ and $c_k,d_k$ for $k=t,\ldots,N-1$.
\end{theorem}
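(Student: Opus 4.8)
The plan is to reduce the expectation of the non-smooth random variable $J_{\text{robust}}$ to a deterministic max-min expression in the moment data, by combining the canonical form of Theorem~\ref{thm:J_robust_canonical} with the interval arithmetic of Section~\ref{notation}. First I would invoke \eqref{eq:J_rob_canonical} to write $J_{\text{robust}}(\bar X(0:t:N+1)) = \max_{i}\min_{j}\{\eta_{ij}+\lambda_{ij}\bar W(0:t:N)\}$ and split the argument of each affine term into its observed and its unobserved parts. Since $\bar W(0:t:N)$ stacks the realized noises $w(0),\ldots,w(t-1)$ on top of the future random vectors $W(t),\ldots,W(N-1)$, the contribution $\lambda_{ij}\bar W$ decomposes into a deterministic quantity (which, exactly like the term $\bar C_\tau$ in \eqref{eq:interval}, I fold into $\eta_{ij}$) plus a purely random part that is a linear combination of the independent future disturbances with support $[a_k,b_k]$ and first-moment interval $[c_k,d_k]$.

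The second step handles the inner minimization. Because a pointwise minimum of affine maps is concave in $\bar W$, Jensen's inequality yields, for each fixed $i$, the bound $\mathbb{E}[\min_j\{\eta_{ij}+\lambda_{ij}\bar W\}]\le \min_j\{\eta_{ij}+\lambda_{ij}\mathbb{E}[\bar W]\}$. The mean $\mathbb{E}[\bar W]$ is not known exactly, but each of its unobserved entries lies in $[c_k,d_k]$; applying the rule $\lambda\cdot[c,d]+\gamma=[\min(\lambda c,\lambda d)+\gamma,\max(\lambda c,\lambda d)+\gamma]$ componentwise and retaining the upper endpoint produces a weighted sum of the $w(0),\ldots,w(t-1)$ and of the $c_k,d_k$ for $k=t,\ldots,N-1$, which I define to be $\hat d_{ij}$. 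This delivers $\mathbb{E}[\min_j\{\cdots\}]\le \min_j(\hat d_{ij}+\eta_{ij})$ for every $i$, which is precisely the per-index content of the claimed bound, with $\eta_{ij}$ inheriting its affine dependence on $\bar u(0:t-1:N)$ and $x_0$ directly from Theorem~\ref{thm:J_robust_canonical}.

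The remaining, and genuinely delicate, step is the outer maximization: one must pass from the $L$ per-index estimates to a single bound on $\mathbb{E}[\max_i(\cdots)]$, and I expect this to be the main obstacle. The difficulty is that expectation does \emph{not} commute with $\max$ in the favorable direction, since $\mathbb{E}[\max_i Y_i]\ge \max_i\mathbb{E}[Y_i]$, so the per-$i$ Jensen bounds of the previous paragraph cannot simply be reassembled under the maximum. To close the argument rigorously I would instead establish the inequality \emph{pointwise, before taking the expectation}: for each realization and each index pair one has $\lambda_{ij}\bar W\le \hat d_{ij}$ provided $\hat d_{ij}$ over-approximates the affine summand on the whole support, so that $\min_j\{\eta_{ij}+\lambda_{ij}\bar W\}\le \min_j(\hat d_{ij}+\eta_{ij})$ holds surely; the maximum over $i$ then preserves this domination surely, and monotonicity of $\mathbb{E}[\cdot]$ finishes the proof. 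The crux is therefore to reconcile the concavity-based, moment-interval bound for the inner $\min$ with a valid treatment of the non-convex outer $\max$, i.e.\ to certify that the $\hat d_{ij}$ generated by the interval arithmetic of Section~\ref{notation} over-approximate $\lambda_{ij}\bar W$ in the sense needed by the $\max$; once this is settled the $\max_i\min_j(\hat d_{ij}+\eta_{ij})$ form follows immediately.
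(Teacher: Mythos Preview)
Your approach tracks the paper's proof closely: start from the max--min canonical form \eqref{eq:J_rob_canonical}, set $Y_{ij}=\eta_{ij}+\lambda_{ij}\bar W(0:t:N)$, attach to each $Y_{ij}$ a support interval and a moment interval via the interval arithmetic of Section~\ref{notation}, and then read off the upper endpoint of the moment interval of $J_{\text{robust}}$ as the desired bound. Your Jensen step for the inner $\min$ is a genuine clarification over the paper, which does not isolate that inequality explicitly.

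The difficulty you flag with the outer $\max$ is real, and the paper does not resolve it any more carefully than you do. The paper's argument is literally the sentence ``Accordingly, $J_{\text{robust}}$ is a random variable with the following support and moment intervals,'' followed by \eqref{eq:J_rob_interval}; i.e.\ it pushes $\max$--$\min$ through the moment intervals as if this were ordinary interval arithmetic. As you correctly observe, $\mathbb{E}[\max_i Z_i]\ge\max_i\mathbb{E}[Z_i]$, so this step is not justified on moment data alone: one can take $Y_{11}=W$, $Y_{21}=-W$ with $\mathbb{E}[W]=0$ to get $\hat d_{11}=\hat d_{21}=0$ while $\mathbb{E}[\max(Y_{11},Y_{21})]=\mathbb{E}|W|>0$. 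Your proposed repair---arguing pointwise and then taking expectations---is sound, but it would deliver a bound in terms of the support endpoints $a_k,b_k$ rather than the moment endpoints $c_k,d_k$ that appear in the theorem statement. In short, you have reproduced the paper's route and correctly isolated the step it leaves unproven; the paper supplies no additional ingredient that closes the gap you identified.
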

\begin{proof}
With focus on the canonical form \eqref{eq:J_rob_canonical},
let $Y_{ij} = \eta_{ij} + \lambda_{ij}\bar{W}(0:t:N)$.
% where $\lambda_{ij}$ is 
%a weighting vector and $\eta_{ij}$ is an affine function of $\bar{u}(0:t-1:N)$ and $x(0)$
%for any $j\in\{1,\dots,m_i\},\;i\in\mathbb \{1,\dots,L\}$. 
%Since $\rho^{\psi}(\bar X(0:t:N+1),0)$ is defined as arbitrary nesting of 
%maximum and/or minimum of some variables, we can rewrite 
%$J_{\text{robust}}(\bar{X}(0:t:N+1))$, using Theorem \ref{thm:J_robust_canonical}, 
%in its canonical form as
%\begin{equation}
%\label{eq:J_rob_can_arb}
%\max_{i\in\{1,\dots,L\}}\min_{j\in\{1,\dots,m_i\}}Y_{ij}.
%\end{equation}
Considering the support and 
moment interval of the components of $W(\tau),\tau=t,\ldots,N-1$, each 
random variable $Y_{ij}$ has the following support 
and moment interval (similar to \eqref{eq:interval})
\begin{equation}
I_{Y_{ij}} = [\hat{a}_{ij} + \eta_{ij},\hat{b}_{ij}+\eta_{ij}], \;\; \mathbb{M}_{Y_{ij}}= [\hat{c}_{ij}+\eta_{ij},\hat{d}_{ij}+\eta_{ij}]
\label{eq:interval2}
\end{equation}
where the constants $\hat{a}_{ij},\hat{b}_{ij},\hat{c}_{ij},\hat{d}_{ij}$, $i\in\{1,\dots,L\}, j\in\{1,\dots,m_i\}$,
are weighted sum of $w(0),\ldots,w(t-1)$ and $a_k,b_k,c_k,d_k$ for $k=t,\ldots,N-1$, using interval arithmetic
(cf. Section \ref{notation}).
%The constants $\hat{C}_{ij}$ are \Sadegh{affine functions of 
%input variables and initial state $x(0)$}, as explained in Section \ref{sec:sys}. 
Accordingly, $J_{\text{robust}}$ 
is a random variable with the following support and moment intervals, 
\begin{align}
&I_{J_{\text{robust}}} = \nonumber\\
% \rho^{\varphi}(\bar X(0:t:N+1),0)
&\hspace{0.4cm} [\max_{i\in\{1,\dots,L\}}\min_{j\in\{1,\dots,m_i\}}(\hat{a}_{ij} \!+\! \eta_{ij}),\max_{i\in\{1,\dots,L\}}\min_{j\in\{1,\dots,m_i\}}(\hat{b}_{ij}\!+\!\eta_{ij}]\nonumber\\
&\mathbb{M}_{J_{\text{robust}}} = \nonumber \\
&\hspace{0.4cm} [\max_{i\in\{1,\dots,L\}}\min_{j\in\{1,\dots,m_i\}}(\hat{c}_{ij}\!+\!\eta_{ij}),\max_{i\in\{1,\dots,L\}}\min_{j\in\{1,\dots,m_i\}}(\hat{d}_{ij}\!+\!\eta_{ij})].
\label{eq:J_rob_interval}
\end{align}
Hence, as we are minimizing the cost function in \eqref{eq:finopt}, we can 
utilize the upper bound
$\max_{i\in\{1,\dots,L\}}\min_{j\in\{1,\dots,m_i\}}(\hat{d}_{ij}+\eta_{ij})$
for
$\mathbb{E}\left[ J_{\text{robust}}(\bar{X}(0:t:N+1))\right]$.
%swap the max and min in \eqref{eq:J_rob_interval} as well.
\end{proof}

Note that the approximation methodology of Theorem \ref{thm:J_rob_can_arb} is 
applicable also to the min-max canonical form \eqref{eq:J_rob_canonical2}.

By replacing the expected objective function by its upper bound given in Theorem \ref{thm:J_rob_can_arb}, and by 
replacing the probabilistic constraints by their equivalent linear approximation (as 
is discussed in Section \ref{sec:const}), the optimization problem \eqref{eq:finopt} 
can be then recast as a mixed integer linear programming (MILP) problem, which can be solved 
using the available MILP solvers \cite{AtaSav:05,JLiTRa:05}.

%$I_{\rho^{\varphi}(\bar X(0:N),0)}$ %in the expressions for $I_{\rho^{\varphi}(\bar X(0:N),0)}$ 
%and $\mathbb{M}_{\rho^{\varphi}(\bar X(0:N),0)}$ as well.

\subsection{Normal distribution}
\label{subsec:normal_approx_obj}
The upper bound on the objective function provided in the previous section does not apply to process noises with unbounded support,
but knowing the distribution of the process noise 
%In many control applications the disturbance is normally distributed.
%Such disturbances do not have a bounded support, as in the previous case, but
provides more 
information about the statistics of the runs of the system.
In this section we address process noises with normal distribution separately due the their wide use in engineering applications.

Suppose that for any $t\in\mathbb N$, $W(t)$ is normally distributed with mean 
$\mathbb E[W(t)] = 0$ and covariance matrix $\Sigma_{W(t)}$, i.e., 
$W(t)\sim\mathcal{N}(0,\Sigma_{W(t)})$.
The explicit form of $X(\tau)$ in \eqref{eq:state} and the fact that normal distribution 
is closed under affine transformations result in normal distribution for $X(\tau)$, $\tau\in\mathbb N$. 
%under Assumption \ref{ass:normal}. 
Its expected value and covariance matrix with an 
observed value $x(t)$ of $X(t)$ are
\begin{align*}
\mu_\tau &= \Phi(\tau,t)x(t) + \sum_{k=t}^{\tau-1}\Phi(\tau,k+1) B(k)u(k) \text{ and} \\
\Sigma_\tau &= \sum_{k=t}^{\tau-1}\Phi(\tau,k+1)\Sigma_{W(k)}\Phi(\tau,k+1)^T,
\end{align*}
respectively, for $\tau\geq t\geq 0$. 

In this section we use the canonical representation of $J_{\text{robust}}(\bar{X}(0:t:N+1))$ 
in Theorem \ref{thm:J_robust_canonical},
%to show how we can compute an upper bound for $\mathbb E\left[J_{\text{robust}}\right]$ based on higher order moments of $W(t)\sim\mathcal{N}(0,\Sigma_{W(t)})$.
%
%Theorem \ref{thm:J_robust_canonical}
which states that $J_{\text{robust}}$ (for
fixed $\bar{u}(0:t:N)$ and $x_0$)
can be written in either of the forms
\begin{equation}
\label{eq:Yij}
\max_{i\in\set{1,\ldots, L}}\min_{j\in\set{1,\ldots, m_i}}Y_{ij}\quad\text{ or }\quad \min_{i\in\set{1,\ldots, K}}\max_{j\in\set{1,\ldots,{n_i}}}Y_{ij}
\end{equation}
with $Y_{ij} = \eta_{ij} + \lambda_{ij}\bar{W}(0:t-1:N)$ being affine functions of the process noise, 
thus normally distributed random variables (similar to $X(\tau)$ explained above). 
With focus on these canonical representations for $J_{\text{robust}}$ we employ 
Proposition \ref{prop:prop1} to show how to approximate 
$\mathbb E\left[J_{\text{robust}}\right]$ using higher order moments of $W(t)\sim\mathcal{N}(0,\Sigma)$.
%
%and use it in Corollaries \ref{cor:normal_maxmin} 
%and \ref{cor:normal_minmax} to show how to compute the required upper bound on 
%$\mathbb E\left[J_{\text{robust}}\right]$ based on higher order moments of $W(t)\sim\mathcal{N}(0,\Sigma)$.
This proposition, also used in \cite{Farvan:16},
follows due to the relation between the infinity norm and the $p$-norm of a vector and
Jensen's inequality.
% The second inequality is the application of Jensen's inequality to the concave function $z^{1/p}$.
% This proposition is already used in \cite{Farvan:16}.
\begin{proposition}
\label{prop:prop1}
Consider random variables $Z_i$ for $i\in\{1,\dots,s\}$ and let $p$ be an even integer.
Then
\begin{align*}
\mathbb{E}\left[ \mathrm{max}(Z_1,\ldots,Z_s) \right]&\leq \mathbb{E}\left[\mathrm{max}(|Z_1|,\ldots,|Z_s|) \right]  \nonumber\\
&\leq\mathbb{E}\left[ ((Z_1)^p+\ldots+(Z_s)^p)^{1/p} \right]  \nonumber \\
&\leq \left( \sum_{i=1}^s \mathbb{E}\left[ (Z_i)^p  \right] \right)^{1/p}.
\end{align*}
\end{proposition}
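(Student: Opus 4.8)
The plan is to establish the three-step chain one inequality at a time, reading left to right. The first two comparisons hold \emph{pointwise} (surely) between random variables and then pass to expectations by monotonicity; only the last step requires a genuinely probabilistic tool, namely Jensen's inequality.

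First I would prove $\mathbb{E}[\max(Z_1,\ldots,Z_s)] \le \mathbb{E}[\max(|Z_1|,\ldots,|Z_s|)]$. Since $Z_i \le |Z_i|$ surely for each $i$, taking the maximum over $i$ preserves the inequality, so $\max_i Z_i \le \max_i |Z_i|$ as random variables; monotonicity of expectation gives the claim. For the middle inequality I would exploit that $p$ is even, which forces $(Z_i)^p = |Z_i|^p \ge 0$ for every $i$. Writing $Z=(Z_1,\ldots,Z_s)$, this identifies $\max_i |Z_i|$ with the infinity norm $\|Z\|_\infty$ and $((Z_1)^p+\cdots+(Z_s)^p)^{1/p}$ with the $p$-norm $\|Z\|_p$. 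The standard comparison $\|v\|_\infty \le \|v\|_p$ on $\mathbb{R}^s$, valid for all $p \ge 1$, then holds pointwise for the random vector $Z$, and taking expectations yields the second inequality.

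The last inequality is the only place where Jensen's inequality enters. I would set $S := \sum_{i=1}^s (Z_i)^p$, which is nonnegative because $p$ is even, and let $g(x) = x^{1/p}$. Since $1/p \le 1$, the map $g$ is concave on $[0,\infty)$, so Jensen's inequality in the concave direction gives $\mathbb{E}[g(S)] \le g(\mathbb{E}[S])$. Combining this with linearity of expectation, $\mathbb{E}[S] = \sum_{i=1}^s \mathbb{E}[(Z_i)^p]$, produces the stated bound $\mathbb{E}\big[(\sum_i (Z_i)^p)^{1/p}\big] \le \big(\sum_i \mathbb{E}[(Z_i)^p]\big)^{1/p}$, completing the chain.

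I expect no serious obstacle, as each step is elementary; the two points requiring care are conceptual rather than computational. First, evenness of $p$ is precisely what makes $(Z_i)^p$ coincide with $|Z_i|^p$ and hence be nonnegative, which simultaneously legitimizes the $p$-norm identification in the middle step and the concavity/Jensen argument in the final step. Second, one must ensure the moments $\mathbb{E}[(Z_i)^p]$ are finite so that all expectations are well defined; in the intended application the relevant $Z_{ij}$ are affine images of Gaussian noise and therefore have finite moments of every order, so integrability is automatic.
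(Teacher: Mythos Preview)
Your proposal is correct and matches the paper's own justification, which simply states that the proposition ``follows due to the relation between the infinity norm and the $p$-norm of a vector and Jensen's inequality.'' You have filled in precisely these two ingredients in the right places, together with the trivial pointwise bound $Z_i\le |Z_i|$ for the first step.
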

%
%Using Proposition \ref{prop:prop1}, we prove the following corollaries that 
%give an upper bound on the expected values of quantities in \eqref{eq:Yij} as a function of moments of $Y_{ij}$.
Founded on Proposition \ref{prop:prop1}, next theorem shows how we can upper bound $\mathbb E\left[J_{\text{robust}}\right]$ using the higher order moments of $Y_{ij}$.
\begin{theorem}
Considering the canonical forms in \eqref{eq:Yij} for $J_{\text{robust}}$ as a function of
random variables $Y_{ij}$, $\mathbb E\left[J_{\text{robust}}\right]$ 
can be upper bounded by 
\begin{align}
\mathbb{E}\left[\max_{i\in\set{1,\ldots, L}}\min_{j\in\set{1,\ldots, {m_i}}}Y_{ij}\right] \!\le \! \left( \sum_{i=1}^{L}\sum_{j=1}^{m_i}\mathbb{E}[Y_{ij}^p] \right)^{1/p}, \label{eq:app1}\\
\mathbb{E}\left[\min_{i\in\set{1,\ldots, K}}\max_{j\in\set{1,\ldots, {n_i}}}Y_{ij}\right] \!\le \!  \min_{i\in\set{1,\ldots, K}}\left( \sum_{j=1}^{n_i}\mathbb{E}\left[ Y_{ij}^p\right] \right)^{1/p}.
\label{eq:app2}
\end{align}
%depending on the canonical form of $J_{\text{robust}}$.
\end{theorem}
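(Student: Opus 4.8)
The plan is to reduce each of the nested max-min and min-max structures to a single maximum over an appropriately chosen index set, and then invoke Proposition \ref{prop:prop1} with $p$ an even integer so that each $Y_{ij}^p$ is non-negative. The recurring device is that a \emph{deterministic, pointwise} inequality between the random objective and a plain maximum is preserved under the monotone expectation operator; once the objective is bounded above (pathwise) by a maximum of the $Y_{ij}$, the proposition supplies the $p$-norm bound.

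For \eqref{eq:app1}, I would first observe that, for every realization, the inner minimization only decreases terms, so $\min_{j\in\{1,\ldots,m_i\}}Y_{ij}\le \max_{j\in\{1,\ldots,m_i\}}Y_{ij}$ for each $i$, and hence $\max_{i}\min_{j}Y_{ij}\le \max_{(i,j)}Y_{ij}$, where the right-hand maximum ranges over all $s=\sum_{i=1}^{L}m_i$ pairs $(i,j)$. Taking expectations and applying Proposition \ref{prop:prop1} to the family $\{Y_{ij}\}_{(i,j)}$ yields
\[
\mathbb{E}\Big[\max_{i}\min_{j}Y_{ij}\Big]\le \mathbb{E}\Big[\max_{(i,j)}Y_{ij}\Big]\le \Big(\sum_{i=1}^{L}\sum_{j=1}^{m_i}\mathbb{E}[Y_{ij}^p]\Big)^{1/p},
\]
which is exactly the claimed bound.

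For \eqref{eq:app2}, I would fix an arbitrary outer index $i^\ast$ and use that the outer minimization is at most any single term, so pointwise $\min_{i}\max_{j}Y_{ij}\le \max_{j\in\{1,\ldots,n_{i^\ast}\}}Y_{i^\ast j}$. Taking expectations and applying Proposition \ref{prop:prop1} to the fixed-$i^\ast$ slice $\{Y_{i^\ast j}\}_{j=1}^{n_{i^\ast}}$ gives $\mathbb{E}[\min_i\max_j Y_{ij}]\le (\sum_{j=1}^{n_{i^\ast}}\mathbb{E}[Y_{i^\ast j}^p])^{1/p}$. Since this holds for \emph{every} choice of $i^\ast$, I would then minimize the right-hand side over $i^\ast\in\{1,\ldots,K\}$, producing \eqref{eq:app2}.

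The arguments are short because Proposition \ref{prop:prop1} does the heavy lifting; the only point demanding care is the role of $p$ being even, which is what lets me identify $Y_{ij}^p$ with $|Y_{ij}|^p$ and keep every summand non-negative, so that the $\ell_\infty\le\ell_p$ comparison and the Jensen step inside the proposition apply. I expect the main (though modest) obstacle to be selecting the correct reduction in each case, namely the full pair set $\{(i,j)\}$ for the max-min bound versus a single fixed slice $\{(i^\ast,j)\}$ for the min-max bound. This asymmetry is precisely what makes the two conclusions differ in form, a single double sum in \eqref{eq:app1} against an outer minimum of sums in \eqref{eq:app2}, and recognizing which reduction is available is the only genuinely structural step.
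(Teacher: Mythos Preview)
Your proposal is correct. For \eqref{eq:app2} your argument is essentially the paper's: the pointwise bound $\min_i\max_j Y_{ij}\le \max_j Y_{i^\ast j}$ followed by taking expectations and then minimizing over $i^\ast$ is exactly the elementary form of the Jensen step $\mathbb{E}[\min_i(\cdot)]\le\min_i\mathbb{E}[\cdot]$ that the paper invokes before applying Proposition~\ref{prop:prop1}.

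For \eqref{eq:app1} you take a slightly different and more direct route. The paper first applies Proposition~\ref{prop:prop1} to the outer maximum with $Z_i=\min_j Y_{ij}$, obtaining $\big(\sum_i\mathbb{E}[(\min_j Y_{ij})^p]\big)^{1/p}$, and then rewrites $\min_j Y_{ij}=-\max_j(-Y_{ij})$ and uses the $\ell_\infty\le\ell_p$ idea from the proposition once more to bound $\mathbb{E}[(\min_j Y_{ij})^p]\le\sum_j\mathbb{E}[Y_{ij}^p]$. You instead collapse the nesting in one stroke via the pointwise inequality $\max_i\min_j Y_{ij}\le \max_{(i,j)}Y_{ij}$ and invoke Proposition~\ref{prop:prop1} a single time on the full family $\{Y_{ij}\}$. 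Both routes yield the same bound; yours is marginally cleaner because it needs only one appeal to the proposition, while the paper's two-stage argument mirrors the nested structure of the canonical form.
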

\begin{proof}
For random variables $Y_{ij},\;i\in\set{1,\ldots, L},\;j\in\set{1,\ldots, {m_i}}$, and for a 
positive even integer $p$, the following inequality holds,
\begin{align*}
\mathbb{E}\left[\max_{i\in\set{1,\ldots,L}}\min_{j\in\set{1,\ldots, {m_i}}}Y_{ij}\right] &\stackrel{(i)}{\leq}  \left(\sum_{i=1}^{L}\mathbb{E}\left[\min_{j\in\set{1,\ldots, {m_i}}}Y_{ij}\right]^p\right)^{1/p} \nonumber\\
& \stackrel{(ii)}{=} \left(\sum_{i=1}^{L}\mathbb{E}\left[-\max_{j\in\set{1,\ldots, {m_i}}}-Y_{ij}\right]^p\right)^{1/p}\nonumber\\
&\stackrel{(iii)}{\leq} \left( \sum_{i=1}^{L}\sum_{j=1}^{m_i}\mathbb{E}[Y_{ij}^p] \right)^{1/p}, 
\end{align*}
where in $(i)$ we used the upper bound obtained 
in Proposition \ref{prop:prop1}; in $(ii)$ we used the fact that 
$\min_{k\in\set{1,\ldots, r}}(\alpha_k) = -\max_{k\in\set{1,\ldots, r}}(-\alpha_k)$;
In $(iii)$ we use again the inequality in Proposition \ref{prop:prop1}. 
Moreover, for $i\in\set{1,\ldots, K},\;j\in\set{1,\ldots, n_i}$, the following inequality holds,
\begin{align*}
\mathbb{E}\left[\min_{i\in\set{1,\ldots, K}}\max_{j\in\set{1,\ldots, n_i}}Y_{ij}\right]
&\stackrel{(i)}{\leq} \min_{i\in\set{1,\ldots, K}}\mathbb{E}\left[ \max_{j\in\set{1,\ldots, {n_i}}}Y_{ij} \right]\\
&\stackrel{(ii)}{\leq} \min_{i\in\set{1,\ldots, K}}\left( \sum_{j=1}^{n_i}\mathbb{E}\left[ Y_{ij}^p\right] \right)^{1/p}\hspace{-0.1in},
\end{align*}
where we apply Jensen's inequality to the concave 
function $\min(\cdot)$ to get $(i)$. The inequality of Proposition 
\ref{prop:prop1} gives $(ii)$. 
\end{proof}

Note that random variables $Y_{ij}$ are normally distributed in both \eqref{eq:app1} and \eqref{eq:app2}.
Higher order moments of normally distributed random variables can be computed analytically in a closed form as a function of the first two moments, i.e., using its mean and variance.
More specifically, for a normally distributed random variable $Z$ with mean $\mu$ and variance $\sigma^2$, 
the $p$-th moment has a closed form as
\begin{equation}
\label{eq4:nor} 
\mathbb{E} \left[Z^{p} \right] = \sigma^{p} i^{-p}H_{p}(i\mu/\sigma)
\end{equation}
where $i$ is the imaginary unit and 
\begin{equation}
 H_{p}(z) = p!\sum_{l=0}^{p/2}\frac{(-1)^l z^{p-2l}}{2^l l!(p-2l)!}
\label{eq4:herm}
\end{equation}
is the $p$-th Hermite polynomial \cite[Chapter 22 and 26]{AbrSte:64}.
We use \eqref{eq4:nor} to compute higher order moments of normal random variables with $p$ being even integers.
Note that the right-hand side of \eqref{eq4:nor} is in fact real because $H_{p}(z)$ contains only even powers of $z$ when $p$ is even.

% \VP{The variable $Z$ in Eq~\eqref{eq4:nor}  is
% an affine sum of $u$ and $W$ (Theorem~\ref{thm:J_robust_canonical}).
% So the mean will depend on $u$. 
% Which means that the $z$ in $H_p(z)$ will depend on $u$.

In the next section we discuss how to cope with the second challenge of 
characterizing the feasible set of the optimization restricted by the chance 
constraint \eqref{eq:probconst}.

\section{Under Approximation of Chance Constraints}
\label{sec:const}

In this section, we discuss methods for computing conservative lower approximations of 
the chance constraints in \eqref{eq:probconst} as linear constraints. 
%the chance constraints in optimization problem \eqref{eq:finopt}.
For the sake of compact notation, we indicate the stochastic run $\Xi(0:N)=X(0)X(1)\ldots X(N)$ only by $\Xi_N$ 
without declaring its dependency on the state, input, and disturbance variables.
Recall the chance constraint \eqref{eq:probconst} as
$\text{Pr}\left[\!(\Xi_N,t) \!\models\! \varphi \! \right]\ge 1-\delta_t.$
In order to transform this constraint to linear inequalities,
we first show in the following theorem, that this constraint can be transformed into similar inequalities but $\varphi$ being an atomic predicate.
%for any formula 
%$\varphi$ and a constant $\vartheta\in(0,1)$, constraints of the forms
%\begin{equation}
%\label{eq:inductive_reduction}
%\text{Pr}\left[\!(\Xi_N,t) \!\models\! \varphi \! \right]\ge\vartheta\text{ and }
%\text{Pr}\left[\!(\Xi_N,t) \!\models\! \varphi \! \right]\le\vartheta
%\end{equation}
%can be transformed into constraints similar to \eqref{eq:inductive_reduction} with 
%$\varphi$ being an atomic predicate.
Then in Sections \ref{sec:probtrans} and \ref{sec:prob}, 
we discuss how to transform the resulting constraints with atomic predicates into linear inequalities for the cases of arbitrary random variables with known bounded 
support and moment interval and of normally distributed random variables.
%can be conservatively transformed into constraints similar to \eqref{eq:inductive_reduction} with sub-formulas of $\varphi$.
%Then we show how to transform constraints \eqref{eq:inductive_reduction}, with $\varphi$ being an atomic predicate, into linear inequalities.
%The first phase is done inductively on the structure of the formula $\varphi$ as follows.
%The second phase is discussed in Section \ref{sec:probtrans} for arbitrary random variables with known bounded support and moment interval
%and in Section \ref{sec:prob} for normally distributed random variables.
\begin{theorem}
for any formula 
$\varphi$ and a constant $\vartheta\in(0,1)$, constraints of the forms
\begin{equation}
\label{eq:inductive_reduction}
\text{Pr}\left[\!(\Xi_N,t) \!\models\! \varphi \! \right]\ge\vartheta\text{ and }
\text{Pr}\left[\!(\Xi_N,t) \!\models\! \varphi \! \right]\le\vartheta
\end{equation}
can be transformed into similar constraints with $\varphi$ being an atomic predicate using the structure of $\varphi$.
\end{theorem}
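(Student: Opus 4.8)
The plan is to prove the statement by structural induction on $\varphi$, carried out \emph{simultaneously} for the two forms in \eqref{eq:inductive_reduction}, since the only connective that mixes a lower- and an upper-probability constraint is negation. I treat a constraint $\text{Pr}[(\Xi_N,t)\models\varphi]\bowtie\vartheta$, with $\bowtie\in\{\ge,\le\}$ and $\vartheta\in(0,1)$, as a \emph{goal}, and show that every goal can be replaced by a finite collection of goals on strictly smaller subformulas (possibly at shifted instants), with all thresholds remaining in $(0,1)$, until every goal refers to an atomic predicate $\pi$. Because this section targets an \emph{under}-approximation, each replacement need only be \emph{sufficient}: the new goals must jointly imply the original one, so that the induced feasible set is contained in the true one. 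The base case $\varphi=\top$ (trivial) and $\varphi=\pi$ are already of the required form, so only the inductive connectives must be handled.

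\emph{Negation.} For $\varphi=\neg\psi$ the exact identity $\text{Pr}[(\Xi_N,t)\models\neg\psi]=1-\text{Pr}[(\Xi_N,t)\models\psi]$ converts a goal on $\varphi$ into the single, equivalent goal on $\psi$ with the inequality reversed and threshold $1-\vartheta\in(0,1)$. \emph{Conjunction.} For $\varphi=\bigwedge_{k=1}^{n}\varphi_k$ with direction $\ge$, the Fr\'echet/Bonferroni bound $\text{Pr}[\bigcap_k E_k]\ge\sum_k\text{Pr}[E_k]-(n-1)$ shows that imposing $\text{Pr}[(\Xi_N,t)\models\varphi_k]\ge\vartheta_k$ with $\sum_k\vartheta_k=(n-1)+\vartheta$ (e.g.\ $\vartheta_k=1-(1-\vartheta)/n\in(0,1)$) is sufficient; for direction $\le$, monotonicity $\text{Pr}[\bigcap_k E_k]\le\text{Pr}[E_{k_0}]$ lets me keep the single goal $\text{Pr}[(\Xi_N,t)\models\varphi_{k_0}]\le\vartheta$. \emph{Disjunction} (derived via $\varphi_1\lor\varphi_2=\neg(\neg\varphi_1\land\neg\varphi_2)$) is dual: a $\le$ goal splits by the union bound into $\text{Pr}[(\Xi_N,t)\models\varphi_k]\le\vartheta_k$ with $\sum_k\vartheta_k=\vartheta$, while a $\ge$ goal reduces by monotonicity to a single disjunct. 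None of these inequalities assumes independence, which is essential, since the truth values of subformulas at different instants are correlated through the shared disturbances $W(\cdot)$ in \eqref{eq:state}.

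The remaining connective, which carries the temporal content, is the until. In discrete time I first rewrite it exactly as the finite Boolean combination
\[
\varphi_1\,{\mathcal U}_{[a,b]}\,\varphi_2 \;\equiv\; \bigvee_{i=a}^{b}\Big( \varphi_2^{(i)} \land \bigwedge_{j=0}^{i-1}\varphi_1^{(j)}\Big),
\]
where $\varphi^{(k)}$ abbreviates $(\Xi_N,t+k)\models\varphi$; the choice $N\ge\text{len}(\varphi)$ guarantees that all shifted instants lie inside the horizon. The derived operators are absorbed through their definitions $\F_{[a,b]}\varphi=\top\,{\mathcal U}_{[a,b]}\varphi$ and $\G_{[a,b]}\varphi=\neg\F_{[a,b]}\neg\varphi$. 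A goal on the until is thus a goal on a conjunction/disjunction of time-shifted copies of the strict subformulas $\varphi_1,\varphi_2$, to which the conjunction and disjunction rules above apply, recursing into $\varphi_1$ and $\varphi_2$. Since negation, conjunction, disjunction, and the unrolled until all generate goals on strictly smaller subformulas (time shifts do not increase structural complexity), the induction is well founded and terminates at atomic predicates, which proves the claim.

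The main obstacle is the temporal case, where two difficulties combine: the until must be expanded into a Boolean combination whose size grows with $b$, and the events in it are strongly dependent, so the decoupling can only be done through the Fr\'echet and union bounds rather than by factorization. The delicate bookkeeping is to verify that each replacement is a genuine sufficient condition and that every threshold stays inside $(0,1)$ throughout the recursion, so that soundness of the under-approximation is preserved end to end. The equal split of thresholds used above is merely one admissible choice, mirroring the earlier observation that the partition of $\delta$ into the $\delta_t$ is itself free.
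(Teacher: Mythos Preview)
Your induction is sound and the sufficiency of each replacement step is correctly argued; the thresholds indeed stay in $(0,1)$, and your handling of the until via the standard non-disjoint expansion followed by the Boolean rules terminates on strict subformulas at shifted instants. So the proof is correct.

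Where you diverge from the paper is in two of the inductive cases. For a conjunction with direction $\le$ (and dually a disjunction with direction $\ge$) you invoke monotonicity and keep a \emph{single} subgoal on one chosen conjunct/disjunct; the paper instead passes to the complement, writes it as a union of \emph{disjoint} events (e.g.\ $\neg\varphi_1$ and $\varphi_1\wedge\neg\varphi_2$), uses exact additivity, and then splits the threshold equally over those pieces. For the until, you unroll into the usual non-disjoint $\bigvee_i(\varphi_2^{(i)}\wedge\bigwedge_{j<i}\varphi_1^{(j)})$ and reuse your conjunction/disjunction rules; the paper builds the disjoint ``first-hitting'' events $\psi_j$ (inserting $\neg\varphi_2$ at the intermediate instants) so that $\text{Pr}[\cdot]=\sum_j\text{Pr}[\psi_j]$ exactly before splitting. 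Your route is shorter and avoids introducing the auxiliary formulas $\varphi_1\wedge\neg\varphi_2$, at the price that picking one branch can be quite conservative; the paper's disjoint decomposition buys an exact additive identity before the split, which can give a tighter under-approximation in the $\ge$ direction for until, but it also forces goals on the compound events $\psi_j$ that must then be unpacked again via the conjunction rule. Both are valid under-approximations; neither dominates the other uniformly.
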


\begin{proof}
%\Samira{The inequality constraints \eqref{eq:inductive_reduction} can be conservatively 
%transformed into constraints similar to \eqref{eq:inductive_reduction} with sub-formulas of $\varphi$, 
%and ultimately with $\varphi$ being an atomic predicate. This can be done inductively 
%on the structure of the formula $\varphi$ as follows: }
The proof is inductive on the structure of the formula $\varphi$ as discussed in the following three cases.
\medskip

\noindent
{\bf Case I: $\varphi = \neg \varphi_1$}
we have the following equivalences
 \begin{align*}
\text{Pr}\left[(\Xi_N,t) \models \neg\varphi_1  \right]\,\,\ge\,\,\vartheta
\Leftrightarrow &  \text{Pr}\left[\!(\Xi_N,t) \nvDash \varphi_1  \right]\,\,\ge\,\,\vartheta \nonumber \\
 \Leftrightarrow & \text{Pr}\left[(\Xi_N,t) \models \varphi_1  \right]\,\,\le\,\,1-\vartheta,\\ %\label{case21}\\
\text{Pr}\left[\!(\Xi_N,t) \models \neg\varphi_1  \right]\,\,\le\,\,\vartheta
 \Leftrightarrow &  \text{Pr}\left[\!(\Xi_N,t) \nvDash \varphi_1  \right]\,\,\le\,\,\vartheta \nonumber \\
 \Leftrightarrow & \text{Pr}\left[(\Xi_N,t) \models \varphi_1  \right]\,\,\ge\,\,1-\vartheta. %\label{case22}
 \end{align*}
%In this case, $\rho^{\varphi}(\Xi_N,t) = \min_{i=a,\ldots,b}\rho^{\psi}(\Xi_N,t+i)$. 
%Hence, for a given $\delta_t\in(0,1)$, and by using the fact that $\text{Pr}[A\wedge B]\geq 1-\delta_t \Leftrightarrow
%\text{Pr}(\neg A \vee \neg B) \leq \delta_t$ and $\text{Pr}( A \vee B) \leq \text{Pr}[A] + \text{Pr}[B]$, 
%we have
%\begin{align}
%&\hspace{-1.5cm}\text{Pr}\left[ \rho^{\varphi}(\Xi_N,t) > 0 \right] \geq 1-\delta_t \nonumber \\
%\Leftrightarrow & \text{Pr}\left[ \left(\min_{i=a,\ldots,b}\rho^{\psi}(\Xi_N,i) \right) > 0 \right] \geq 1-\delta_t \nonumber \\
%\Leftrightarrow & \text{Pr}\left[ \bigwedge_{i=a}^{b}(\rho^{\psi}(\Xi_N,i) > 0) \right] \geq 1-\delta_t \nonumber \\
%\Leftrightarrow& \text{Pr}\left[\bigvee_{i=a}^{b} (\rho^{\psi}(\Xi_N,i) \leq 0) \right] \leq \delta_t \nonumber\\
%\Leftarrow& \sum_{i=a}^{b} \text{Pr}\left[\rho^{\psi}(\Xi_N,i) \leq 0 \right] \leq \delta_t.
%\label{eq:prob1}
%\end{align} 
%In order to recursively solve constraint \eqref{eq:prob1}, we assume that the 
%probability of each term in the summation is less than 
%or equal to $\delta_t/n$, i.e., $\text{Pr}\left[\rho^{\psi}(\Xi_N,i)\leq 0 \right]\leq \delta_t/n$ 
%for $i\in[a,b]$ with $n=b-a+1$. \\
{\bf Case II: $\varphi = \varphi_1\wedge\varphi_2$}
we obtain the following inequalities by using the fact that for possibly joint events $\mathcal{A}$ and $\mathcal B$, 
it holds that $\text{Pr}[\mathcal A\wedge \mathcal B]\geq\vartheta \Leftrightarrow
\text{Pr}(\neg \mathcal A \vee \neg \mathcal B) \leq 1-\vartheta$ and $\text{Pr}( \mathcal A \vee \mathcal B) \leq \text{Pr}[\mathcal A] + \text{Pr}[\mathcal B]$.
\begin{align}
&\hspace{-0.5cm}\text{Pr}\left[ (\Xi_N,t) \models\varphi_1\wedge\varphi_2 \right] \geq \vartheta \nonumber \\
\Leftrightarrow & \text{Pr}\left[ (\Xi_N,t) \models\varphi_1 \wedge (\Xi_N,t) \models\varphi_2 \right] \geq \vartheta \nonumber \\
\Leftrightarrow & \text{Pr}\left[ (\Xi_N,t) \nvDash\varphi_1 \vee (\Xi_N,t) \nvDash\varphi_2 \right] \leq 1-\vartheta \nonumber \\
\Leftarrow & \text{Pr}\left[ (\Xi_N,t) \nvDash\varphi_1\right] + \text{Pr}\left[  (\Xi_N,t) \nvDash\varphi_2 \right] \leq 1-\vartheta \nonumber \\
\Leftarrow & \text{Pr}\left[ (\Xi_N,t) \nvDash\varphi_i\right] \leq \frac{1-\vartheta}{2} \;\;\;i=1,2.
\label{eq:case31} 
\end{align}
Note that in the last line of \eqref{eq:case31}, we assume that the probability of the two events are upper bounded by the same value, i.e., $(1-\vartheta)/2$.
However, this can be replaced by any two other probabilities $\delta_1$ and $\delta_2$ such that $\delta_1+\delta_2=1-\vartheta$.
Now consider the second possibility:
\begin{align}
&\hspace{-0.3cm}\text{Pr}\left[ (\Xi_N,t) \models\varphi_1\wedge\varphi_2 \right] \leq \vartheta \nonumber \\
\Leftrightarrow &\text{Pr}\left[ (\Xi_N,t) \models\neg\varphi_1\vee\neg\varphi_2 \right] \geq 1-\vartheta \nonumber \\
\Leftrightarrow &\text{Pr}\left[ (\Xi_N,t) \models\neg\varphi_1\vee(\varphi_1\wedge\neg\varphi_2) \right] \geq 1-\vartheta \nonumber \\
\Leftrightarrow &  \text{Pr}\left[ (\Xi_N,t)\models\neg\varphi_1 \right] \!+\! \text{Pr}\left[ (\Xi_N,t) \models\varphi_1\wedge\neg\varphi_2\right] \geq 1-\vartheta, \label{eq:case32}
%\Leftrightarrow & \text{Pr}\left[ (\Xi_N,t) \models\varphi_1 \wedge (\Xi_N,t) \models\varphi_2 \right] \leq \vartheta \nonumber \\
%\Leftrightarrow & \text{Pr}\left[ (\Xi_N,t) \nvDash\varphi_1 \vee (\Xi_N,t) \nvDash\varphi_2 \right] \geq 1-\vartheta \nonumber \\
%\Leftrightarrow &  \text{Pr}\left[ (\Xi_N,t) \nvDash\varphi_1 \vee ((\Xi_N,t) \models\varphi_1\wedge (\Xi_N,t) \nvDash\varphi_2) \right] \geq 1-\vartheta \nonumber \\
%\Leftrightarrow &  \text{Pr}\left[ (\Xi_N,t) \nvDash\varphi_1 \right] \!+\! \text{Pr}\left[ ((\Xi_N,t) \models\varphi_1\wedge (\Xi_N,t) \nvDash\varphi_2) \right] \geq 1-\vartheta,
%\label{eq:case32}
\end{align}
where the last line of \eqref{eq:case32} is due to the fact that the events are disjoint. 
Assuming that the probabilities of these two events are lower bounded by the same values, i.e., $(1-\vartheta)/2$, we have
the inequalities %$\text{Pr}\left[ (\Xi_N,t)\models\neg\varphi_1 \right]\ge (1-\vartheta)/2$ and
\begin{align}
&\text{Pr}\left[ (\Xi_N,t)\models\neg\varphi_1 \right]\ge (1-\vartheta)/2 ,  \label{case322}\\
&\text{Pr}\left[ (\Xi_N,t) \models\varphi_1\wedge\neg\varphi_2\right] \geq \frac{1-\vartheta}{2},
\label{case323}
\end{align}
which are in the form of inequalities discussed previously.
%\begin{align}
%&\hspace{-0.8cm}\text{Pr}\left[ (\Xi_N,t) \nvDash\varphi_1 \right] \geq \frac{1-\vartheta}{2} \Leftrightarrow \text{Pr}\left[ (\Xi_N,t) \models\varphi_1 \right] \leq \frac{1+\vartheta}{2}, \label{case321} \\
%&\hspace{-0.8cm}\text{Pr}\left[ ((\Xi_N,t) \models\varphi_1\wedge (\Xi_N,t) \nvDash\varphi_2) \right] \geq \frac{\delta_t}{2} \nonumber \\
% \Leftrightarrow & \text{Pr}\left[ ((\Xi_N,t) \nvDash\varphi_1\vee (\Xi_N,t) \models\varphi_2) \right] \leq 1-\frac{\delta_t}{2} \nonumber \\
% \Leftarrow &  \text{Pr}\left[ ((\Xi_N,t) \nvDash\varphi_1\right] + \text{Pr}\left[ (\Xi_N,t) \models\varphi_2) \right] \leq 1-\frac{\delta_t}{2} \nonumber\\
% \Leftrightarrow & \text{Pr}\left[ ((\Xi_N,t) \nvDash\varphi_1\right] \leq \frac{1}{2}(1-\frac{\delta_t}{2}) \nonumber \\
% &\text{and}\; \text{Pr}\left[ (\Xi_N,t) \models\varphi_2) \right] \leq \frac{1}{2}(1-\frac{\delta_t}{2})
% \label{case322}
%\end{align}
Note that Equations \eqref{eq:case31} to \eqref{case322} discuss the case of having 
conjunction of two STL formulas.
The results can be easily extended to conjunction of $n$ STL formulas by replacing $(1-\vartheta)/2$ with $(1-\vartheta)/n$.
%Similarly, in \eqref{case322}, devision by 2 should be replaced by the number of terms 
%that appear in the conjunction. \\
\medskip

\noindent
{\bf Case III: $\varphi = \varphi_1\mathcal{U}_{[a,b]}\varphi_2$}
The satisfaction $(\Xi_N,t)\models \varphi_1\mathcal{U}_{[a,b]}\varphi_2$ is equivalent to $\bigvee_{j=t+a}^{t+b}\psi_j$ with disjoint events
\begin{equation*}
\psi_j \!=\!\!\!\! \bigwedge_{i=t}^{t+a-1}(\Xi_N,i)\!\models\varphi_1 \!\!\!\bigwedge_{i=a+t}^{j-1}(\Xi_N,i)\!\models(\varphi_1\wedge\neg\varphi_2)\wedge(\Xi_N,j)\models\varphi_2.
\end{equation*}
Thus $\text{Pr}\left[  (\Xi_N,t)\models \varphi_1\mathcal{U}_{[a,b]}\varphi_2\right] \geq \vartheta$ 
is equivalent to $\sum_{j=t+a}^{t+b}\text{Pr}[\psi_j]\geq\vartheta$. Assuming the probabilities of events 
are lower bounded by the same values, we have
$\text{Pr}[\psi_j]\geq\vartheta/(b-a+1)$ for $j=a+t,\dots,b+t$, which again can be reduced as in Case II.

The second possible probabilistic constraint in Case III can be obtained as 
\begin{align}
\text{Pr}\left[  (\Xi_N,t)\models \varphi_1\mathcal{U}_{[a,b]}\varphi_2\right] \leq \vartheta %\nonumber \\
&\Leftrightarrow  \text{Pr}\left[ \bigvee_{j=a+t}^{b+t} \psi_j \right] \leq \vartheta \nonumber \\
&\Leftrightarrow \sum_{j=t+a}^{t+b}\text{Pr}[\psi_j] \leq \vartheta \nonumber \\
&\Leftrightarrow  \text{Pr}[\psi_j]\geq\vartheta/(b-a+1),
\end{align}
which can be again reduced as in Case II. Here also, we used the fact 
that $\psi_j$ consists of disjoint events %and hence, 
%$\text{Pr}( \mathcal A \vee \mathcal B) = \text{Pr}[\mathcal A] + \text{Pr}[\mathcal B]$ 
%for disjoint events $\mathcal A$ and $\mathcal B$, 
and we assume that he probabilities of events are lower bounded by the same 
value, i.e., by $\vartheta/(b-a+1)$, for $j=a+t,\dots,b+t$. 
%
%\Samira{In all these cases, the same procedure can be repeated recursively on the structure 
%of the sub-formulas until the obtained sub-formula is an atomic predicate.}
\end{proof}

%Note that in all the above cases, 
%we have not yet obtained a linear representation for 
%$\text{Pr}\left[ \rho^{\varphi}(\Xi_N,t) > 0 \right] \geq (\leq) 1-\delta_t$.
%%(and equivalently in \eqref{eq:prob5} and \eqref{eq:prob6}),  
%This can be done inductively as follows: in each expression of these equations, 
%if $\psi, \varphi_1$, and $\varphi_2$ are  
%defined as $\pi^{\alpha}$, then we are in Case I; if not, 
%%$\psi, \varphi_1$ and $\varphi_2$ are not linear predicates, 
%we repeat the same procedure as in Case II to IV, until the formula 
%we have is an atomic predicate $\pi^{\alpha}$ and then we 
%are in Case I again. Hence,
So far we have shown how to inductively reduce the chance constraint \eqref{eq:probconst} 
to inequalities of the form \eqref{eq:inductive_reduction} with atomic predicates.
In the rest of this section we discuss their corresponding linear inequalities for the two types 
of probability distributions considered in this paper.

\subsection{Arbitrary probability distributions with bounded support}
\label{sec:probtrans}
%Recall that, using the notion of robustness, 
%the chance constraint $\text{Pr}\left[\!{\Xi}(x(0),u_{\text{old}},w_{\text{realized}},\bar{u}(t),\bar{W}(t)) \!\models\! \varphi \! \right]$ 
%in \eqref{eq:finopt} is greater than or equal to %equivalent to  
%$\text{Pr}\left[\rho^{\varphi}({\Xi}(x(0),u_{\text{old}},w_{\text{realized}},\bar{u},\bar{W}),t) >0 \right]$.
%Hence, we replace the probabilistic constant by this lower bound such that 
%$\text{Pr}\left[\rho^{\varphi}({\Xi}(x(0),u_{\text{old}},w_{\text{realized}},\bar{u},\bar{W}),t) >0 \right] \geq 1-\delta_t$.
%Accordingly, in this section, we find a linear equivalence 
%for 
%\[
%\text{Pr}\left[\rho^{\varphi}({\Xi}(x(0),u_{\text{old}},w_{\text{realized}},\bar{u},\bar{W}),t) >0 \right]. 
%\]
%We refer to the stochastic run ${\Xi}(x(0),u_{\text{old}},w_{\text{realized}},\bar{u}(t),\bar{W}(t))$ 
%by $\bar{X}(t)$ in this section, for brevity. 

To transform the chance constraints into linear 
constraints in the case of having random variables with 
arbitrary probability distributions, we apply an approximation method based 
on the upper bound proposed by \cite{BouGou:16}. Let 
$Z_1,\ldots,Z_n$ be random variables with interval of 
bounded support $[a_i,b_i]$ and let $\mathbb{E}[Z_1],\ldots,\mathbb{E}[Z_n]$ denote their 
expected values belonging to the moment intervals $\mathbb{M}_{i}$ 
for $i=1,\ldots,n$. Define $Z = \sum_{i=1}^{n}Z_i$ and 
$\mathbb{E}(Z) = \sum_{i=1}^{n}\mathbb{E}[Z_i]$.  Using Chernoff-Hoeffding inequality, 
the following upper bound exists for any $\varsigma\geq 0$ \cite{Jan:04}
\begin{equation}
\text{Pr}\left[ Z - \mathbb{E}[Z] \leq -\varsigma \right] \leq \text{exp}\left( \frac{-\varsigma^2}{\nu\sum_{i=1}^{n}(b_i-a_i)^2} \right).
\label{eq:sir}
\end{equation}
where $\nu>0$ is a constant. If $Z_1,\ldots,Z_n$ are dependent, then the 
inequality applies with a constant $\nu = \chi(\hat{G})/2$, where $\hat{G}$ 
denotes the indirected dependency graph of $Z_1,\ldots,Z_n$ and $\chi(\hat{G})$ 
is the chromatic number of the graph $\hat{G}$ defined as the minimum number 
of colors required to color $\hat{G}$. For the independent case, $\chi(\hat{G})=1$. 
The expression for the right tail probability is derived identically.
For more details, the reader is referred to \cite{BouGou:16}.

%Here, we discuss Case I for the case of having random variables with 
%known intervals of support and first moment intervals.
%\\
%{\bf Case I}: $\varphi := \pi^{\alpha}$ \\
%In this case, since by assumption, $\alpha$ is a
Consider the chance constraints \eqref{eq:inductive_reduction} with $\varphi = \{\alpha\ge 0\}$.
Since $\alpha$ is an
affine function of random state variables, it is a 
random variable itself with the following interval of support and 
moment interval
\begin{equation}
\begin{array}{c}
I_{\alpha(X(t))} = [\tilde{a}_{t} + \tilde{C}_{t},\tilde{b}_{t}+\tilde{C}_{t}] \\
\mathbb{M}_{\alpha(X(t))} = [\tilde{c}_{t}+\tilde{C}_{t},\tilde{d}_{t}+\tilde{C}_{t}]
\end{array}
\label{eq:interval3}
\end{equation} 
where for $t=0,\ldots,N$, we have $\tilde{a}_{t},\tilde{b}_{t},\tilde{c}_{t}$ and $\tilde{d}_{t}$ are weighted sum of 
$\bar{a}_t,\bar{b}_t,\bar{c}_t,\bar{d}_t$ related to the interval of support and moment 
interval of random variables $X(t)$ (cf. \eqref{eq:interval}), and $\tilde{C}_{t}$ is a linear expression of 
input variables. 

%\textcolor{red}{Fix the issue related to the fact that moments of the random 
%variables may not be greater than 0!}. 
Let $\varsigma = \mathbb{E}\left[ \alpha(X(t))\right]$; we can directly use \eqref{eq:sir} as
\begin{align}
\hspace{-1cm}\text{Pr}\left[ (\Xi_N,t)\models\pi^{\alpha} \right] \geq &1-\delta_t \Leftrightarrow \text{Pr}\left[ \alpha(X(t)) > 0 \right] \geq 1-\delta_t \nonumber \\
\Leftrightarrow & \text{Pr}\left[ \alpha(X(t)) \leq 0 \right] \leq \delta_t  \nonumber \\
\Leftarrow & \text{exp}\left( \frac{-\varsigma^2}{\nu\sum_{t=1}^{N}(\tilde{b}_t-\tilde{a}_t)^2} \right) \leq \delta_t  \nonumber \\
\Leftrightarrow & \frac{-\varsigma^2}{\nu\sum_{t=1}^{N}(\tilde{b}_t-\tilde{a}_t)^2} \leq \log(\delta_t) \nonumber \\
\Leftrightarrow & -\varsigma^2 \leq \nu\log(\delta_t) \sum_{t=1}^{N}(\tilde{b}_t-\tilde{a}_t)^2 \nonumber \\
\Leftarrow & \varsigma \geq \sqrt{-\nu\log(\delta_t)  \sum_{t=1}^{N}(\tilde{b}_t-\tilde{a}_t)^2} %\nonumber \\
%\Leftrightarrow & \tau \geq \sqrt{-\nu\log(\delta_t)  \sum_{t=1}^{N}(\bar{b}_t-\bar{a}_t)^2} \; \nonumber \\
%\text{or}&\;\;  \tau \leq -\sqrt{-\nu\log(\delta_t)  \sum_{t=1}^{N}(\bar{b}_t-\bar{a}_t)^2}
\label{eq:case1}
\end{align} 
%where $\tau = -\mathbb{E}\left[ \rho^{\varphi}(\Xi_N,t)\right]$ and we can replace 
%each of the inequalities in the last two lines of \eqref{eq:case1} by 
Note that since $\delta_t\in(0,1)$, we 
have $\log(\delta_t)<0$; hence, by multiplying both sides of the inequality by -1 in 
line 5 of \eqref{eq:case1}, the expression $-\log(\delta_t) \cdot \sum_{t=1}^{N}(\tilde{b}_t-\tilde{a}_t)^2$ 
becomes a positive number, and hence, its square root is a real number. 
Note also that the last inequality is due to the fact that $\varsigma\geq 0$. 
Hence, we can replace $\varsigma$ in the last inequality of \eqref{eq:case1} by 
the lower bound % upper bound and lower bound 
of its moment interval in \eqref{eq:interval3}, 
i.e., with $\tilde{c}_{t}+\tilde{C}_{t}$, % $\bar{d}_{t}+\bar{C}_{t}$ and $\bar{c}_{t}+\bar{C}_{t}$
which is a linear expression in the input variables. 

Consequently, in this case, 
the chance constraint in \eqref{eq:finopt} can be replaced 
by %one of the following linear constraints 
\begin{equation}
%\bar{d}_{t}+\bar{C}_{t} \leq -\sqrt{-\nu\log(\delta_t) \cdot \sum_{t=1}^{N}(\bar{b}_t-\bar{a}_t)^2} \label{eq:sir1}\\
\tilde{c}_{t}+\tilde{C}_{t} \geq \sqrt{-\nu\log(\delta_t) \cdot \sum_{t=1}^{N}(\tilde{b}_t-\tilde{a}_t)^2}. 
\label{eq:sir2}
\end{equation}
For the second type of probabilistic inequality (cf. \eqref{eq:inductive_reduction}), we can 
again use \eqref{eq:sir} for the right tail probability; hence we have
\begin{align}
\text{Pr}\left[ (\Xi_N,t)\models\pi^{\alpha} \right] &\leq 1-\delta_t \nonumber \\
\Leftarrow& \text{Pr}\left[ \alpha(X(t)) \geq 0 \right] \leq 1-\delta_t \nonumber \\
\Leftarrow & \text{exp}\left( \frac{-\varsigma^2}{\nu\sum_{t=1}^{N}(\tilde{b}_t-\tilde{a}_t)^2} \right) \leq 1-\delta_t, 
\label{case11}
\end{align}
and then following the same steps as in \eqref{eq:case1}, we obtain the same linear expression 
for the chance constant as in \eqref{eq:sir2} by only replacing $\delta_t$ by $1-\delta_t$ in the 
related expressions.
\subsection{Normal distribution}
\label{sec:prob}
To transform the chance constraints into linear constraints in the case of having 
normally distributed random variables, we use the quantile of the 
normal distribution. By definition, for a normally distributed random variable $x$ with mean 
$\mu$ and standard deviation $\sigma$,
\begin{align}
&\text{Pr}[x\leq b] \leq \delta_t \Leftrightarrow F^{-1}(\delta_t)\geq b \Leftrightarrow \mu + \sigma \phi^{-1}(\delta_t) \geq b
\label{quantile1} \\
&\text{Pr}[x\leq b] \geq \delta_t \Leftrightarrow F^{-1}(\delta_t)\leq b \Leftrightarrow \mu + \sigma \phi^{-1}(\delta_t) \leq b
\label{quantile2}
\end{align}
where $F^{-1}$ denotes the inverse of the cumulative distribution function 
or the quantile function and $\phi^{-1}$ is the inverse of the error function 
of a normally distributed random variable. 

%Here also, we consider Case I separately. \\
%{\bf Case I}: $\varphi := \pi^{\alpha}$ \\
Recall the chance constraints \eqref{eq:inductive_reduction} with $\varphi = \{\alpha\ge 0\}$.
Since $\alpha$ is an 
affine function of normally distributed state variables, it is also 
normally distributed with appropriately defined mean $\mu_t$ and 
variance $\sigma_t^2$. %where $\mu_t = \sum_{i=1}^n\xi_i + \mu_i$ 
%and $\sigma_t^2 = \sum_{i=1}^n\sigma_i^2$ where $\mu_i$ and $\sigma_i,\;i=1,\dots,n$ 
%are mean and variance of the state variables, which are present in $\alpha$. 
Hence, we can directly use \eqref{quantile1} and \eqref{quantile2} as
\begin{align}
\text{Pr}\left[ (\Xi_N,t)\models\pi^{\alpha} \right] \geq 1-\delta_t \Leftrightarrow& \text{Pr}\left[ \alpha(X(t)) > 0 \right] \geq 1-\delta_t \nonumber \\
\Leftrightarrow & \text{Pr}\left[ \alpha(X(t)) \leq 0 \right] \leq \delta_t \nonumber\\
\Leftrightarrow& F^{-1}(\delta_t)\geq 0 \nonumber \\
\Leftrightarrow & \mu_{t} + \sigma_t \phi^{-1} (\delta_t) \geq 0,
\label{eq:prob0} \\
\text{Pr}\left[ (\Xi_N,t)\models\pi^{\alpha} \right] \leq 1-\delta_t \Leftrightarrow&\text{Pr}\left[ \alpha(X(t)) > 0 \right] \leq 1-\delta_t \nonumber \\
\Leftrightarrow & \text{Pr}\left[  \alpha(X(t))  \leq 0 \right] \geq \delta_t \nonumber\\
\Leftrightarrow & F^{-1}(\delta_t)\leq 0 \nonumber \\
\Leftrightarrow & \mu_{t} + \sigma_t \phi^{-1} (\delta_t) \leq 0.
\label{eq:prob1}
\end{align} 
Therefore, the chance constraint can be replaced 
by the equivalent linear constraint \eqref{eq:prob0} or \eqref{eq:prob1}, 
depending on the type of constraint we have.

\section{Experimental Results}
\label{sec:exam}

We now apply our controller synthesis approach to the 
room temperature control in a building. The details 
of the thermal model can be found in \cite{Maa:13,Raman15}, and 
is briefly explained here for clarity.
The temperature of room 
$r$ is denoted by $T_{r}$ and the wall and the temperature of the wall 
between the room and its surrounding $j$ (e.g. other rooms) are denoted by $w_{j}$ and 
$T_{w_{j}}$, respectively. Dynamics of the temperature of wall $w_{j}$ and 
room $r$ can be written as \cite{Maa:13}
\begin{align}
C_{j}^{w}\frac{dT_{w_{j}}}{dt} &= \sum_{k\in\mathcal{N}_{w_{j}}}\frac{T_{r,k}-T_{w_{j}}}{R_{j,k}}  + r_j\alpha_{j}A_{w_{j}}Q_{\text{rad}j} \label{eq:examp1}\\
C_{j}^{r}\frac{dT_{r}}{dt} &= \sum_{k\in\mathcal{N}_{r}}\frac{T_{k}-T_{r}}{R_{k}}  + \dot{m}_{r}c_a(T_{s} - T_{r}) \nonumber\\
&\;\;\;\;\;+ w_i\tau_{w}A_{\text{win}}Q_{\text{rad}}+\dot{Q}_{\text{int}}
\label{eq:examp2}
\end{align}
where $C_{j}^{w},\alpha_{j}$ and $A_{w_{j}}$ are heat capacity, a radiative heat
absorption coefficient, and the area of $w_{j}$, respectively. The total thermal 
resistance between the centerline of wall $j$ and the side of the wall on which 
node $k$ is located is denoted by $R_{j_k}$. 
The radiative heat flux density on $w_{j}$ is denoted by $Q_{\text{rad}j}$, the
set of all neighboring nodes to $w_{j}$ is denoted by $\mathcal{N}_{w_{j}}$, and $r_{j}$ is a wall identifier,
which equals 0 for internal walls and 1 for peripheral walls,
where $j$ is the outside node. The temperature, heat capacity and air mass flow 
into room $r$ are denoted by $T_{r} , C_{j}^{r}$ and $\dot{m}_{r}$, respectively; 
$c_a$ is the specific heat capacity of air, and $T_{s}$ is
the temperature of the supply air to room $r$, $w$ is a window
identifier, which equals 0 if none of the walls surrounding
room $r$ have windows, and 1 if at least one of them does, $\tau_{w}$
is the transmissivity of the glass of the window in room $r$, $A_{\text{win}}$ is the
total area of the windows on walls surrounding room $r$, $Q_{\text{rad}}$
is the radiative heat flux density per unit area radiated to room
$r$, and $\dot{Q}_{\text{int}r}$ is the internal heat generation in room $r$. Finally, 
$\mathcal{N}_{r}$ is the set of neighboring room nodes for room $r$. Further details
on this thermal model can be found in \cite{Maa:13}.

As such, the heat transfer equations for each wall and room $r$ is in the form of nonlinear differential equation.
After linearization and time-discretization, the model of the system becomes in the form of dynamical equation
% \eqref{eq:discsys}
%\begin{align*}
%\dot{X}(t) = f(X(t),u(t),W(t)),
%\end{align*}
%with the following discrete-time linearized model 
\begin{equation*}
X(t+1) = A X(t) + B u(t) + W(t),
%\label{eq:example}
\end{equation*}
where $X\in\mathbb{R}^{n}$ is the state vector representing the 
temperature of the walls and the rooms and $u\in\mathbb{R}^{m}$ is 
the input vector representing the air mass flow rate and discharge air 
temperature of conditioned air into each thermal zone.
Matrices $A$ and $B$ are obtained by time discretization of dynamics 
of the thermal model 
\eqref{eq:examp1}-\eqref{eq:examp2} 
with a sampling time 
of $t_s = 30$ minutes.
The disturbance $W(\cdot)$ aggregates various \emph{unmodeled} 
dynamics and the uncertainty in physical variables such as the outside temperature, internal heat generation 
and radiative heat flux density.
The statistics of $W(\cdot)$ can be estimated using historical
data \cite{Maa:13}.

In this example, we only control the temperature of one room and 
include the temperature of the neighboring rooms as part of the 
disturbance signal $W(t)$.
We also assume that there is a reference for the disturbance 
signal, denoted by $w_r(t)$, and the reference is perturbed by 
independent and identically distributed random vectors $e(t)\sim\mathcal{N}(0,\mathbb I_n)$, %\;i\in\mathbb N_n$, 
i.e., the disturbance is
$W(t) = w_r(t) + e(t)$, 
which is normally distributed with mean $\mu_{t}=w_r(t)$ and 
identity covariance matrix $\Sigma = \mathbb I_n$.
%\VP{VP: $W_r(t)$ should be $w_r(t)$, or some smallcase letter since the reference signal is
%not stochastic.}
%\Samira{Considering the definition of $W(t)$ in this example, 
%it can take both positive and negative quantities and hence, normal distribution would be 
%an appropriate choice for the distribution of the disturbance.}

%We assume that the disturbance signal 
%is normally distributed. 
In contrast to \cite{Raman15}, which considers deterministic disturbances from
a bounded set, we consider stochastic disturbances and we maximize the robustness 
of satisfaction of the STL specifications in the presence of such disturbance.
Accordingly, we handle chance constraints and include the expected 
value of the robustness function in the objective function.

We consider a signal $\text{occ}:\mathbb N\rightarrow\{-1,1\}$ representing 
the room occupancy; $\text{occ}(t) = 1$ if the room is occupied at time $t$ 
and $\text{occ}(t) = -1$ otherwise. 
This signal is assumed to be known for the entire simulation period.  
The MPC prediction horizon $N$ is chosen to be 
$24$, representing 12 hours monitoring of the room temperature. We select 
$\delta = 0.1$ so that the obtained control input 
provides confidence level of $90\%$ on the
satisfaction of the desired behavior.
We are interested in keeping the room temperature above 
a reference temperature $T_r$ when the room is occupied; thus the specification is
\begin{equation*}
\psi = \G_{[0,N]}\big(\text{occ}(t) = 1\,\,\rightarrow
\,\, X(t) > T_r\big).
\end{equation*}
At each time instant $0\leq t < N$, the optimization 
problem \eqref{eq:finopt} obtains an optimal control input 
$\tilde u_{opt}(t:N) = [u_{opt}(t),\ldots,u_{opt}(N-1)]$ that minimizes
\begin{equation*}
\mathbb{E}[-\rho^{\psi}(\bar X(0:t:N),0)] + \sum_{k=t}^{N-1}||u(k)||_1,
\end{equation*}
where the robustness function is defined as
\begin{equation*}
\rho^{\psi}(\bar X(0:t:N),0) = \min\{ X(\tau) - T_r\  \mid \tau\in[0,N], \text{occ}(\tau)>0 \}.
\end{equation*}
%$\bar X(0:t:N) = [x_0,\ldots,x_t,X(t+1),\ldots,X(N)]$ is a finite run of 
%state variables such that $x_0,x_1,\ldots,x_t$ are the observed states and $X(t+1),\ldots,X(N)$ 
%are the random state variables at time $t$,
The chance constraint \eqref{eq:probconst} is defined with the same specification $\varphi = \psi$.
%as the probability that 
%a finite stochastic run of system \eqref{eq:example} satisfies $\psi$ is greater 
%than or equal to $1-\delta$.
We approximate $\mathbb{E}[-\rho^{\psi}(\tilde X(0:t:N),0)]$ 
using the upper bound \eqref{eq:app2} and
%we approximate the chance constraints by its lower bound $\text{Pr}[\rho^{\psi}(\bar X(0:t:N),t)>0] \geq 1-\delta$ and then
transform the chance constraint \eqref{eq:probconst}
into linear inequalities using the approach of Section \ref{sec:const}. 
We also assume that inputs are bounded, i.e., for each $0\leq t<N$, we have $0 \leq u(t) \leq 380$.

The simulations are done using {\sc Matlab} R2014b on a 2.6 GHz Intel Core i5 processor and 
the optimizations are solved using {\it fmincon} solver in {\sc Matlab}. 
we perform $n_s = 200$ simulations in order to check the satisfiability of the STL specifications with a probability 
greater than or equal to $0.9$.   
Figure~\ref{fig:fig1} shows the results of these $200$ simulations.
The top plot shows the occupancy signal and the middle plot 
illustrates the average, minimum, and maximum of the 
obtained room temperatures over 12 hours
as well as the room reference temperature $T_r$ in Fahrenheit. 
The controller ensures that the room temperature goes above the reference temperature 
$T_r$ once the occupancy signal is 1 and stays there as 
long as the room is occupied. The minimum and maximum bounds on the room 
temperature shows that the specifications have never  
been violated in these $200$ simulations. The bottom plot shows the average, minimum, and maximum of the 
air flow rate in
%$\text{ft}^3/\text{min}$,
$\left[\frac{\text{ft}^3}{\text{min}}\right]$, 
which indicates that the input constraint is not violated.

\begin{figure}[t]
      \centering
      \includegraphics[width=\linewidth]{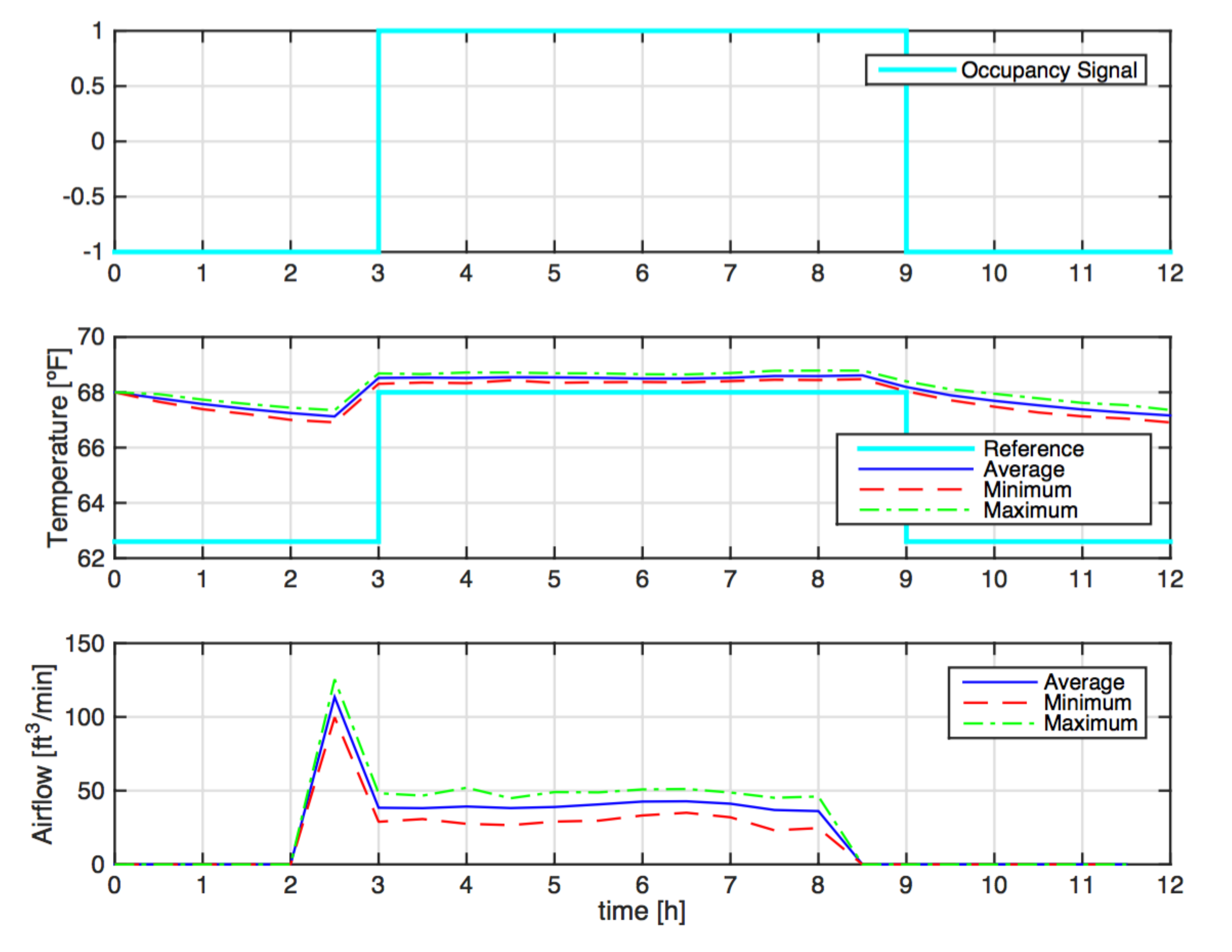} 
      \caption{Controlling the room temperature using SHMPC in the presence of normally 
      distributed disturbance and STL constraints.}
\label{fig:fig1}
\end{figure}

%The reason for picking 200 simulations is related to the confidence level we 
%would like to have to show the feasibility of the optimization problem 
%\eqref{eq:finopt} after applying the approximations presented 
%in Sections \ref{sec:obj} and \ref{sec:const}.
Note that all these $n_s = 200$ runs result in feasible solutions, which gives a confidence 
bound on the feasibility of the original problem as follows. 
%According to \cite{CloPea:34}, 
Since all the $n_s$ runs of the simulation are feasible, we can claim that the original problem is
also feasible with probability at least $(\beta/2)^{1/n_s}$, $\beta\!\in\!(0,1)$, with confidence level $1\!-\!\beta$ \cite{CloPea:34};
hence, having $200$ runs being all feasible, % according to our simulation results, 
the optimization problem \eqref{eq:finopt} is also feasible with probability $0.98$ with confidence level $0.95$. 

To further illustrate the performance of the proposed method, we compare our SHMPC approach with the 
robust MPC (RMPC) approach of \cite{Raman15}, 
in which the disturbance belongs to a bounded polyhedral set.
%% and with the one of the open-loop optimal control. 
Note that RMPC approach is not directly applicable to unbounded uncertainties. Therefore, 
in the optimization procedure, we truncate a normally distributed disturbance 
in the $2\sigma$ interval such that $e(t)\in[-1,1]$. 
Further, we solve the RMPC optimization problem using Monte Carlo sampling.

\begin{table}[t]
\caption{Comparison of the statistics of the fan energy consumption using different control approaches.}
 \begin{tabular}{|c|c|c|}
  \hline
   \text{Computational}  & \text{Fan energy } & Average \\
   \text  Methods& consumption [kWh] & computation time [s]  \\ [0.5ex]
   \hline
   \text{Open-loop OC} & $1337.016$ &  3.9277   \\ [0.5ex]
   \text{RMPC}&$\mu_1 =12.2216 $, $\sigma_1 = 0.045 \mu_1$ &   33.4891 \\ [0.5ex]
   \text{SHMPC} & $\mu_2 = 2.5101$, $\sigma_2 = 0.104\mu_2$ &  19.3622  \\ [0.5ex]
     \hline
 \end{tabular}
 \label{tab:tab1}
\end{table}

The total fan energy consumption is proportional to the cubic of 
the airflow.
Table~\ref{tab:tab1} shows the total fan energy consumption and the computation 
time for the three approaches.
For RMPC and SHMPC, we report the average and standard deviation of total energy 
consumption using the sum of cubes of the optimal input sequences corresponding to 
the $200$ simulations. Also, for these two approaches, we report the average computation 
time over the $200$ simulations. 
Comparing statistics of these two approaches is essential because of the chance 
constraints in SHMPC and the Monte Carlo sampling based optimization in RMPC.
The energy consumption using open-loop optimal control (OC) is very high, 
comparing to both RMPC and SHMPC. This is due to the fact that the open-loop 
strategy computes the solution of optimization problem \eqref{eq:finopt} only once 
and hence, the computation time is smaller compared to the two other methods.
As a result, the input sequence has an aggressive behavior to make 
sure that it reacts in time to the changes happening in the system.
Since RMPC is more conservative compared to SHMPC, the average energy consumption is much higher for
the RMPC controller compared to the SHMPC controller:
the SHMPC controller achieves a $80\%$ reduction of total energy consumption on average compared to RMPC.

\section{Conclusions}
\label{sec:con}
In this paper, we presented shrinking horizon model predictive control (SHMPC) for 
stochastic discrete-time linear systems with signal temporal logic (STL) specifications. 
Our aim was to obtain an optimal control sequence that guarantees the satisfaction of 
STL specifications with a probability greater than a certain level. By assumption, 
the stochastic disturbance signal had an arbitrary probability distribution with a bounded support and the only 
available information related to this distribution is the intervals of support and the moment 
intervals of each component of the disturbance signal. 
Using an existing approximation technique, we showed that the chance constraints 
could be approximated by an upper bound, which resulted in having approximate linear 
constraints for the chance constraints. Moreover, in the case of having the state costs 
and/or the robustness function related to the degree of satisfaction of the specifications 
by the state trajectory, their expected value can be also approximated using the moment intervals 
of components of the disturbance signal. As an additional case, we further considered disturbances that 
are normally distributed and we showed that the chance constraints in this case can be 
replaced by the quantile expressions which are linear in the input variables. 
%We also showed that in this case, the expected value of the robustness function can be 
%approximated by its upper bound and hence, the resulting optimization problem is convex.
In the end, in an example, we applied the proposed method to control a HVAC system.

%\addtolength{\textheight}{-12cm}   % This command serves to balance the column lengths
                                  % on the last page of the document manually. It shortens
                                  % the textheight of the last page by a suitable amount.
                                  % This command does not take effect until the next page
                                  % so it should come on the page before the last. Make
                                  % sure that you do not shorten the textheight too much.

%%%%%%%%%%%%%%%%%%%%%%%%%%%%%%%%%%%%%%%%%%%%%%%%%%%%%%%%%%%%%%%%%%%%%%%%%%%%%%%%

%%%%%%%%%%%%%%%%%%%%%%%%%%%%%%%%%%%%%%%%%%%%%%%%%%%%%%%%%%%%%%%%%%%%%%%%%%%%%%%%

%%%%%%%%%%%%%%%%%%%%%%%%%%%%%%%%%%%%%%%%%%%%%%%%%%%%%%%%%%%%%%%%%%%%%%%%%%%%%%%%

%Appendixes should appear before the acknowledgment.

%\section*{ACKNOWLEDGMENTS}
%Do we have any?!!!

\bibliographystyle{plain}        % Include this if you use bibtex 

%\providecommand{\noopsort}[1]{}\providecommand{\sahinidummy}{}\providecommand{\Zs}{Zs}

%\bibliography{strings,references}  

\begin{IEEEbiography}[{\includegraphics[width=1in,height=1.25in,clip,keepaspectratio]{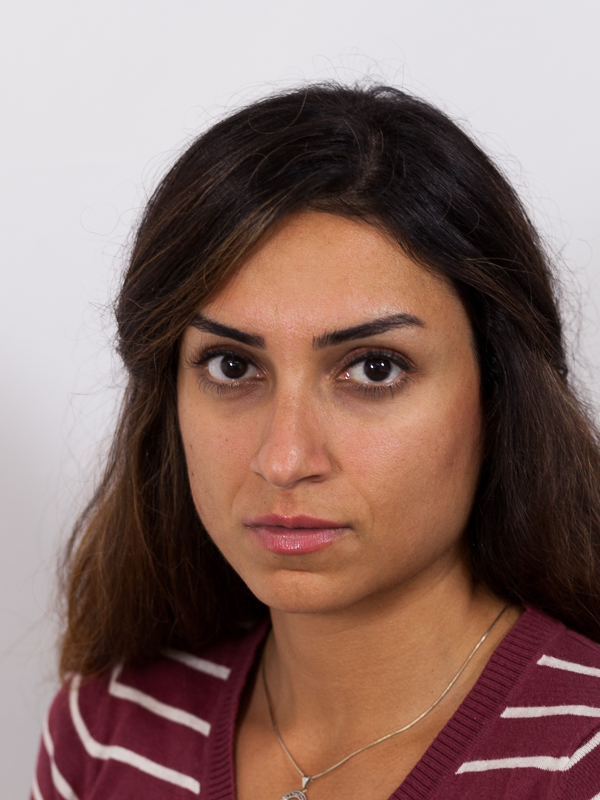}}]{Samira S. Farahani}
is a postdoctoral researcher at Max Planck Institute for Software Systems. Her research interests are 
in control synthesis for stochastic hybrid systems, temporal logic and formal methods, and discrete-event systems. Prior to this position, 
she was a postdoctoral researcher at CMS department, California Institute of Technology, and at Energy and 
Industry Department, Delft University of Technology. Dr. Farahani has 
received her PhD degree in Systems and Control and her MSc. degree in Applied Mathematics, both from Delft 
University of Technology, the Netherlands, in 2012 and 2008, respectively. She obtained her BSc. degree in Applied Mathematics from 
Sharif University of Technology, Iran, in 2005.
\end{IEEEbiography}
 
\begin{IEEEbiography}[{\includegraphics[width=1in,height=1.25in,clip,keepaspectratio]{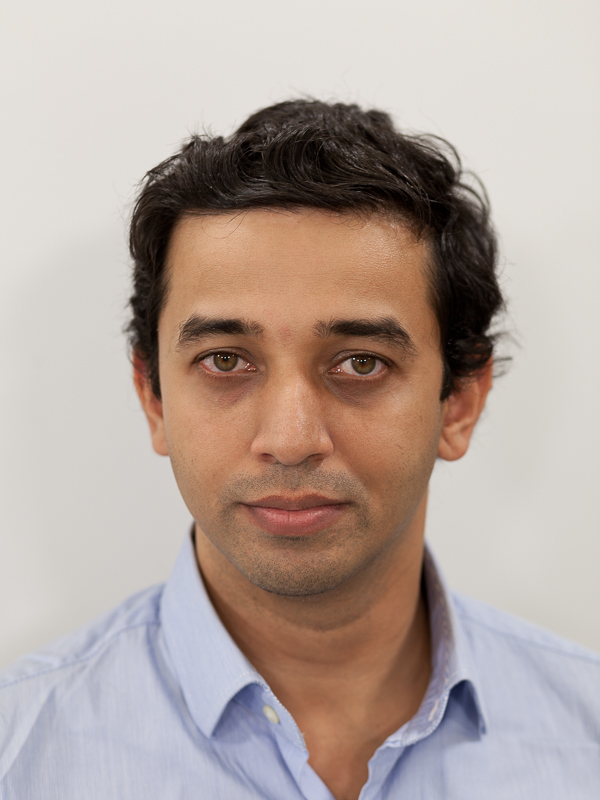}}]{Rupak Majumdar}
is a Scientific Director at the Max Planck Institute for Software Systems. His research interests are in the verification and control of reactive, real-time, hybrid, and probabilistic systems, software verification and programming languages, logic, and automata theory. Dr. Majumdar received the President's Gold Medal from IIT, Kanpur, the Leon O. Chua award from UC Berkeley, an NSF CAREER award, a Sloan Foundation Fellowship, an ERC Synergy award, "Most Influential Paper" awards from PLDI and POPL, and several best paper awards (including from SIGBED, EAPLS, and SIGDA). He received the B.Tech. degree in Computer Science from the Indian Institute of Technology at Kanpur and the Ph.D. degree in Computer Science from the University of California at Berkeley. 
\end{IEEEbiography}

\begin{IEEEbiography}[{\includegraphics[width=1in,height=1.25in,clip,keepaspectratio]{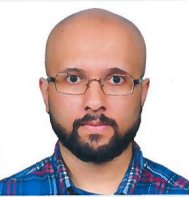}}]{Vinayak Prabhu}
did his PhD in the department of Electrical Engineering and Computer Sciences
at the University of California, Berkeley (USA) in 2008, and is currently
a postdoctoral researcher at Max Planck Institute for Software Systems.
His research interests are in modelling, analysis, verification, and
control, of reactive, real-time, hybrid, and multi-agent systems.
\end{IEEEbiography}

\begin{IEEEbiography}[{\includegraphics[width=1in,height=1.25in,clip,keepaspectratio]{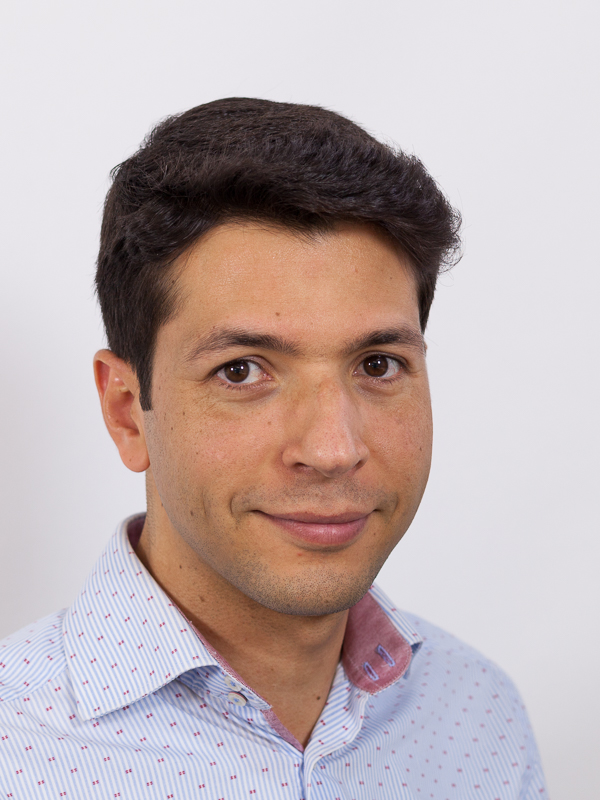}}]{Sadegh Esmaeil Zadeh Soudjani}
is a postdoctoral researcher at the Max Planck Institute for Software Systems, Germany.
His research interests are formal synthesis, abstraction, and verification of
complex dynamical systems with application in Cyber-Physical Systems, particularly
involving power and energy networks, smart grids, and transportation systems.
Sadegh received the B.Sc. degrees in electrical engineering and pure
mathematics, and the M.Sc. degree in control engineering from the University of Tehran,
Tehran, Iran, in 2007 and 2009, respectively.
He received the Ph.D. degree in Systems and Control in November 2014 from
the Delft Center for Systems and Control at the Delft University of Technology,
Delft, the Netherlands.
His PhD thesis received the best thesis award from the Dutch Institute of Systems and Control. 
Before joining Max Planck Institute, he was a postdoctoral researcher at the department of
Computer Science, University of Oxford, United Kingdom.
\end{IEEEbiography} 

\end{document}